\newtheorem{theorem}{Theorem}
\newtheorem{lemma}{Lemma}
\newtheorem{corollary}{Corollary}
\newtheorem{definition}{Definition}
\newtheorem{assumption}{Assumption}
\newcommand{\alv}[1]{{#1}}
\newcommand{\ksr}[1]{{ #1}}
\newcommand{\newedit}[1]{{#1}}
\begin{document}
\raggedbottom	
\title{Matrix product state approximations to quantum states of low energy variance }
	
\author{Kshiti Sneh Rai}
\email{ksrai@lorentz.leidenuniv.nl}
\affiliation{Instituut-Lorentz, Niels Bohrweg 2, Leiden, NL-2333 CA, The Netherlands}
\affiliation{Max-Planck-Institut für Quantenoptik, Hans-Kopfermann-Straße 1, D-85748 Garching, Germany}

\author{J. Ignacio Cirac}
\affiliation{Max-Planck-Institut für Quantenoptik, Hans-Kopfermann-Straße 1, D-85748 Garching, Germany}

\author{\'Alvaro M. Alhambra}
\email{alvaro.alhambra@csic.es}
\affiliation{
Instituto de F\'isica Te\'orica UAM/CSIC, C/ Nicol\'as Cabrera 13-15, Cantoblanco, 28049 Madrid, Spain}
\affiliation{Max-Planck-Institut für Quantenoptik, Hans-Kopfermann-Straße 1, D-85748 Garching, Germany}

\begin{abstract}

We show how to efficiently simulate pure quantum states in one dimensional systems that have both finite energy density and vanishingly small energy fluctuations. We do so by studying the performance of a tensor network algorithm that produces matrix product states whose energy variance decreases as the bond dimension increases. Our results imply that variances as small as $\propto 1/\log N$ can be achieved with polynomial bond dimension. With this, we prove that there exist states with a very narrow support in the bulk of the spectrum that still have moderate entanglement entropy, in contrast with typical eigenstates that display a volume law. Our main technical tool is the Berry-Esseen theorem for spin systems, a strengthening of the central limit theorem for the energy distribution of product states. We also give a simpler proof of that theorem, together with slight improvements in the error scaling, which should be of independent interest. 
	\end{abstract}
	
	\maketitle

\section{Introduction}

It is widely established that entanglement is one of the most important concepts in the study of quantum many-body systems. The main reason why is that the character of entanglement in a system can typically be connected to fundamental physical properties. For instance, an area law in the low-energy states is associated with absence of criticality, localized correlations, and tensor network approximations \cite{Verstraete_2006,Eisert_2010}. On the other hand, a larger amount of entanglement in the ground state can be associated to the appearance of quantum phase transitions \cite{CriticalEntanglement}. 

The entanglement properties in the bulk of the spectrum, beyond the low energy sector, are also of crucial importance. For instance, the energy eigenstates of finite energy density (with zero energy variance) most often have large amounts of entanglement, compatible with their local marginals resembling Gibbs states as per the Eigenstate Thermalization Hypothesis (ETH) \cite{Srednicki_1994,DeutschETH1991}. In contrast, product states are also in the bulk of the spectrum, but their lack of entanglement comes hand in hand with a larger energy variance.

Eigenstates and product states are the two extreme situations of either no energy variance or no entanglement. However, is this a fundamental trade-off? Or, alternatively, are there states that have both low entanglement and small energy variance?
Here, we study this intermediate regime by answering the following question: what is the entanglement generated when narrowing down the variance of an initial quantum state?
We do this by rigorously analyzing the performance of a matrix product state algorithm inspired by that in \cite{Banuls_2020} that decreases the energy variance of any initial product state, while only increasing the bond dimension in a controlled manner.
\newedit{In \cite{Banuls_2020}, a heuristic analysis of the algorithm was given, with a similar expression for the performance which suggested that small variances were possible in arbitrary Hamiltonians. 
They also numerically showed that for specific ones obeying the ETH, arbitrarily small variances for the final state can be achieved. 
This analysis, however, did not consider the most general cases. 
In fact, we show that some simple counterexamples are not able to reach arbitrarily small variances, and that these are related to cases in which the distance of the wavefunction from a Gaussian is maximal. Our rigorous results explicitly show how in general the larger the deviation, the more limited one is in achieving smaller variances.}

As a main result, we show that vanishingly small energy variances $\delta^2$, down to $\delta^2 \propto 1/\log(N)$, can be achieved with polynomial bond dimension, or $\log N$ entanglement entropy, in many models of interest. 
\newedit{We demonstrate the accuracy of our results by discussing particular cases in which the bounds are tight (Sec.~\ref{sec:product}), corresponding to examples where the support of the initial product state limits how much the variance can be decreased. These models also include the counterexamples to the heuristic results in \cite{Banuls_2020}.}

Our method for studying the energy distribution at the output of the algorithm is based on the Berry-Esseen theorem \cite{berry1941accuracy,esseen1942liapounoff} for spin systems. This result quantifies how much the energy distribution of product states resembles a Gaussian. We give a short proof of it based on the cluster expansion result from \cite{Wild_2022}, improving previous ones by a polylogarithmic factor \cite{Berry-Esseen}. This renders the bound optimal, and should be of independent interest. 
In our proofs, we use that approximate Gaussianity to estimate the energy average and variance of the  state after a \emph{filtering operator} has been applied to it. This filter is designed to narrow down the energy fluctuations, while being expressible as a tensor network with a bond dimension that can be easily upper bounded.

An additional motivation for this scheme is that it is expected that, under the ETH, any state with a low enough energy variance will also locally resemble a Gibbs state \cite{Dymarsky2019,Banuls_2020}. This idea is the basis of heuristic quantum algorithms for measuring local thermal expectation values \cite{Lu2021,Schuckert_2023}, in a way that is potentially amenable to near term quantum simulators \cite{Daley2022}. With our bounds on the bond dimension of the MPS scheme, we rigorously analyze the classical simulability of those algorithms. In that sense, we narrow down the situations in which such a scheme will be computing quantities beyond the reach of classical computers. In doing so, we also offer efficiency guarantees for related tensor network algorithms \cite{_akan_2021,Lu2021,Yilun2022}.

The article is arranged as follows. We introduce the setting in Sec. \ref{sec:prelims}, and analyze the energy distribution of product states in Sec. \ref{sec:product}. The energy filter is introduced in Sec. \ref{sec:filter}, and in Sec. \ref{sec:algo} we upper bound its bond dimension. We specialize to ETH Hamiltonians in Sec. \ref{sec:ETH} and conclude. The more technical proofs are placed in the appendices.


\section{Preliminaries}\label{sec:prelims}

We focus on systems of $N$ particles described by a local Hamiltonian
\begin{equation}
    H=\sum_{i=1}^N h_i=\sum_j E_j \ketbra{E_j}{E_j},
\end{equation}
where each of the Hamiltonian terms is such that $\norm{h_i}\le 1$, and overlaps with a small $\mathcal{O}(1)$ number of other qubits. If the Hamiltonian is in 1D (which we assume in Sec. \ref{sec:algo}), this refers to adjacent sites on the chain. The spectrum $\{E_j \}$ is thus confined to the range $[-N,N]$. The initial states that we consider are product among all particles
\begin{equation}
    \ket{p}=\bigotimes_i \ket{p_i}=\sum_{E_j} b_j \ket{E_j},
\end{equation}
so that $b_j$ are the coefficients of the wavefunction in the energy eigenbasis. Their average and variance are
\begin{align}
    &E= \bra{p} H \ket{p} = \sum_{E_j} \vert b_j \vert^2 E_j, \\ &\sigma^2 = \bra{p} (H-E)^2 \ket{p} = \sum_{E_j} \vert b_j \vert^2 (E_j-E)^2.
\end{align}
A mild but important assumption throughout this work is the following. 
\begin{assumption}\label{ass1}
The ratio
 $\frac{\sigma}  {\sqrt{N}} \equiv s $ is $\Omega(1)$ for all $N$.
\end{assumption}
\alv{The symbol $\Omega(1)$ means that the quantity is at least as large as a constant factor, independent of $N$.} Also, it can be easily shown that, for product states, $\sigma = \mathcal{O}\left (\sqrt{N} \right)$. Thus, we are only assuming that the variance scales as fast as possibly allowed, which is a generic property of product states. With this, we rule out trivial situations in which e.g. $\ket{p}$ is already an eigenstate of the Hamiltonian \alv{or fine-tuned situations in which the energy variance of a large region in the system is made artificially small}. Our goal will be to reduce the energy variance of these product states by applying a filtering operator, while controlling the amount of entanglement generated in the filtering process.

\section{Gaussian energy distribution} \label{sec:product}

Due to the lack of correlations among all the sites, the energy distribution of a product state shares features with the distribution of $N$ independent random variables. Along these lines, the central limit theorem \cite{Hartmann_2004} and the Chernoff-H\"offding bound \cite{anshu2016concentration,Kuwahara_2016} have previously been shown.

Here, we focus on a closely related but stronger result: 
the Berry-Esseen theorem \cite{berry1941accuracy,esseen1942liapounoff}. Originally, this showed that the cumulative distribution function of $N$ random variables is close to that of a Gaussian with the same average and variance, up to an error $\mathcal{O}(N^{-1/2})$. To introduce it in our context, let us define the following.

\begin{definition}\label{def:Improved_BE}   
Consider the cumulative function
    \begin{equation}
        J(x) := \sum\limits_{E_j\leq x}{\vert b_j\vert^2},
    \end{equation}
    and the corresponding Gaussian cumulative function
    \begin{equation}
        G(x) := \int\limits_{-\infty}^{x}{\frac{dt}{\sqrt{2\pi\sigma^2}}e^{-\frac{(t-E)^2}{2\sigma^2}}}.
    \end{equation}
    The Berry-Esseen error $\zeta_N$ is 
    \begin{equation}
        \sup_x{\vert J(x)-G(x)\vert} \equiv \zeta_N.
        \label{eq:berry-esseen}
    \end{equation}
\end{definition}
The generalization of this definition to arbitrary states, and in particular mixed ones, is straightforward. The scaling of the error $\zeta_N$ with $N$ quantifies how much does the energy distribution deviate from the normal distribution. Our first technical result is the following.

\begin{lemma}\label{le:BEerror}
Let $\ket{p}$ be a product state obeying Assumption \ref{ass1}. Then,
\begin{equation}\label{eq:BEsca}
    \zeta_N = \mathcal{O}(N^{-1/2}).
\end{equation}
\end{lemma}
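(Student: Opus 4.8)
The plan is to follow the Fourier-analytic route of the classical Berry--Esseen theorem, replacing the independence of the summands by cluster-expansion control of the characteristic function. The central object is the Fourier transform of the energy distribution,
\[
\phi(t) := \bra{p} e^{itH} \ket{p} = \sum_j \vert b_j\vert^2 e^{itE_j},
\]
together with the Gaussian characteristic function $\phi_G(t) = e^{itE - \sigma^2 t^2/2}$. First I would invoke the Esseen smoothing inequality, which bounds the Kolmogorov distance by
\[
\zeta_N \le \frac{1}{\pi}\int_{-T}^{T} \left\vert \frac{\phi(t) - \phi_G(t)}{t}\right\vert dt + \frac{C}{T}\,\sup_x \vert G'(x)\vert,
\]
valid for any cutoff $T>0$ and an absolute constant $C$. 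Since $\sup_x\vert G'(x)\vert = (2\pi\sigma^2)^{-1/2} = \Theta(N^{-1/2})$ by Assumption \ref{ass1}, the boundary term is already $\mathcal{O}(N^{-1/2})$ as soon as $T$ is bounded below by a constant, so the whole problem reduces to estimating the integral over a fixed window $\vert t\vert \le T = \Theta(1)$.

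The key analytic input is the cluster expansion of \cite{Wild_2022}, which I would use to write $\log\phi(t)$ as a convergent cumulant series $\sum_{k\ge 1}\frac{(it)^k}{k!}\kappa_k$ inside a disk $\vert t\vert \le t_0$ with $t_0=\Theta(1)$ independent of $N$, enjoying extensive bounds $\vert\kappa_k\vert \le N\,c^k\,k!$ on the cumulants. By construction $\kappa_1 = E$ and $\kappa_2 = \sigma^2$, so these two terms reproduce $\log\phi_G$ exactly, while locality and the norm bounds on the $h_i$ force $\vert\kappa_3\vert = \mathcal{O}(N)$. Choosing $T=t_0$ and shrinking it by a constant if needed, the extensivity of the cumulants guarantees that the quadratic term dominates the tail of the series, yielding the uniform Gaussian-type decay $\vert\phi(t)\vert \le e^{-c\sigma^2 t^2}$ throughout the window; the same expansion gives the leading-order estimate $\vert\phi(t) - \phi_G(t)\vert \le C' N \vert t\vert^3 e^{-c\sigma^2 t^2}$.

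Feeding this into the smoothing inequality, the integral is bounded by $\int_{-\infty}^{\infty} C' N \vert t\vert^2 e^{-c\sigma^2 t^2}\,dt = \mathcal{O}(N/\sigma^3)$, which is $\mathcal{O}(N^{-1/2})$ precisely because $\sigma = \Theta(\sqrt{N})$ under Assumption \ref{ass1}. Combined with the boundary term, this yields $\zeta_N = \mathcal{O}(N^{-1/2})$.

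I expect the main obstacle to be the second step: establishing that the cluster-expansion series for $\log\phi(t)$ converges with extensive cumulant bounds uniformly up to a constant radius $t_0$, and that within this radius the higher-order terms are genuinely subleading to the quadratic one, so that the single estimate $\vert\phi(t)\vert \le e^{-c\sigma^2 t^2}$ covers the entire integration window. This is exactly what removes the need for a separate, and usually delicate, tail bound on $\vert\phi(t)\vert$ at intermediate frequencies --- the part of classical Berry--Esseen proofs hardest to port to a correlated setting. Assumption \ref{ass1} is what ensures this regime matches the Gaussian scale $\vert t\vert \sim 1/\sqrt{N}$ and that no logarithmic losses appear, recovering the optimal $N^{-1/2}$ rate.
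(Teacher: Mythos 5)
Your proposal is correct and follows essentially the same route as the paper: Esseen's smoothing inequality with a cutoff at a constant (unrescaled) frequency $T=\Theta(1)$, combined with the cluster-expansion control of $\log\phi(t)$ from \cite{Wild_2022} (their Theorem 10 supplies exactly the bound $\vert\log\phi - \log\phi_G\vert = \mathcal{O}(N\vert t\vert^3)$ with Gaussian decay on a constant-radius window that you flag as the main obstacle), yielding the two $\mathcal{O}(N/\sigma^3)=\mathcal{O}(N^{-1/2})$ contributions. The only difference is cosmetic: the paper works in the rescaled variable $t'=t\sigma$, which absorbs your $\sup_x\vert G'(x)\vert$ factor into the absolute constant of the boundary term.
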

The proof is shown in Appendix \ref{sec:Improved_BE_proof}. It follows straightforwardly from the original Esseen's inequality, together with the cluster expansion results from \cite{Wild_2022}. Notice that this Lemma holds for all Hamiltonians that are few-body local, even beyond 1D. 

Lemma \ref{le:BEerror} can be seen as a qualitative strengthening of \cite{Hartmann_2004}, where it was shown that in 1D $\lim_{N\rightarrow \infty} \zeta_N=0$.
It also gives a poly-logarithmic improvement on the best previous bound \cite{Berry-Esseen,Brandao_2015G} which showed that $\zeta_N = \mathcal{O}\left(\frac{\log^{2D}N}{\sqrt{N}}\right)$ in a $D$ dimensional lattice. That result, however, also holds for the more general class of states with exponential decay of correlations, for which we expect the logarithmic factor is necessary \cite{TikhomirovBE81}.

The scaling of Eq. \eqref{eq:BEsca} is tight up to constant factors. For a specific example matching the bound, consider $N$ independent random coin tosses with equal probability, so that the probability of $k$ tails is (see Fig. \ref{fig:binomial})
\begin{equation}\label{eq:binom}
    p_k= \frac{1}{2^N}\binom{N}{k}.
\end{equation}
Let $N$ be odd. Then, for $\frac{1}{2}N \ge x \ge \frac{1}{2}(N-1)$,
\begin{align}\label{eq:xline}
\vert J(x)-G(x) \vert &= \left \vert \frac{1}{2}-\int\limits_{-\infty}^{x}{\frac{dt}{\sqrt{2\pi\sigma^2}}e^{-\frac{(t-\frac{1}{2}N)^2}{2\sigma^2}}} \right \vert 
\\& =\left \vert \int\limits_{0}^{\frac{1}{2}N-x}{\frac{dt}{\sqrt{2\pi\sigma^2}}e^{-\frac{(t-\frac{1}{2}N)^2}{2\sigma^2}}} \right \vert \nonumber 
\\ & \ge \frac{1}{\sqrt{\pi}}\left(\frac{\sqrt{2}\left(\frac{1}{2}N-x \right)}{\sigma}\right.\nonumber
\\&\quad\quad\quad\quad\quad\quad\left.
-\frac{\left(\frac{1}{2}N-x \right)^3}{ \sqrt{2}\sigma^3} \right).\nonumber
\end{align}
Choosing e.g. $x=\frac{1}{2}(N-\frac{1}{2})$ means the lower bound is $\Omega(\sigma^{-1})=\Omega(N^{-1/2})$. 
\begin{figure}[t!]
\centering
\includegraphics[trim=0.02cm 0.2cm 0 0, clip,scale=0.68]{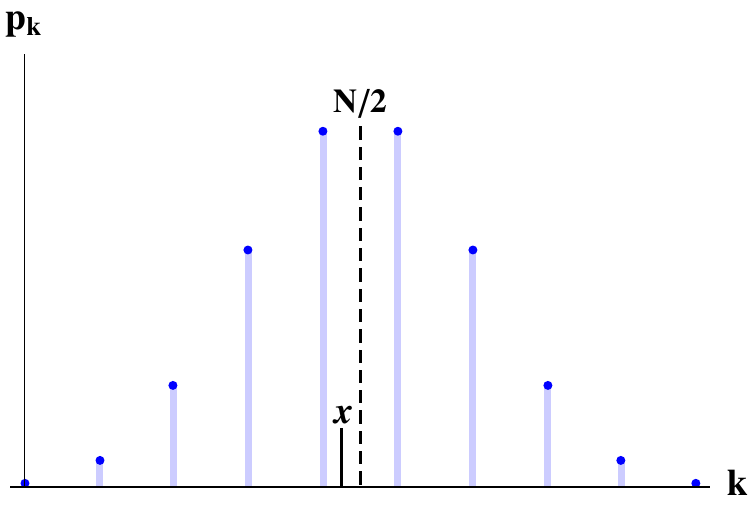}
\caption{Representation of the binomial distribution $p_k$ in Eq. \eqref{eq:binom} for an odd value of $N$. For a non-interacting Hamiltonian with this distribution, the average energy is at $N/2$, where there is no eigenstate. The smallest standard deviation one can reach by filtering this distribution is limited to the difference between eigenergies at $k=\frac{N-1}{2}$ and $k=\frac{N+1}{2}$. We also illustrate a possible choice of $x$ in the derivation of Eq. \eqref{eq:xline}.} 
\label{fig:binomial}
\end{figure}

Importantly, there are product states on spin systems with exactly Eq. \eqref{eq:binom} as the energy distribution. 
A trivial example is the state $\bigotimes_i^N \ket{+}_i$ with a non-interacting Hamiltonian $H=\sum_i \sigma_{Z,i}$. 
\newedit{Additionally}, it also appears in certain Hamiltonians with strong interactions, such as the model considered in \cite{Schecter_2019}, in which the product state has support on a $\mathcal{O}(N)$ number of eigenstates called \emph{quantum scars} \cite{Turner_2018,AlvaroScars}.
\newedit{These models are counterexamples to the generic case assumed in \cite{Banuls_2020}.
In Sec.~\ref{sec:ETH}, we restrict to ETH Hamiltonians, and explain how in those generic cases these counterexamples are not present.}






\section{Energy filter}\label{sec:filter}

We now define an operator which, when applied to any quantum state (in this case, our initial product state), will decrease its energy variance.
\begin{figure}[t!]
\centering
\includegraphics[trim=2.3cm 0.085cm 1.9cm 1.1cm, clip,scale=0.66]{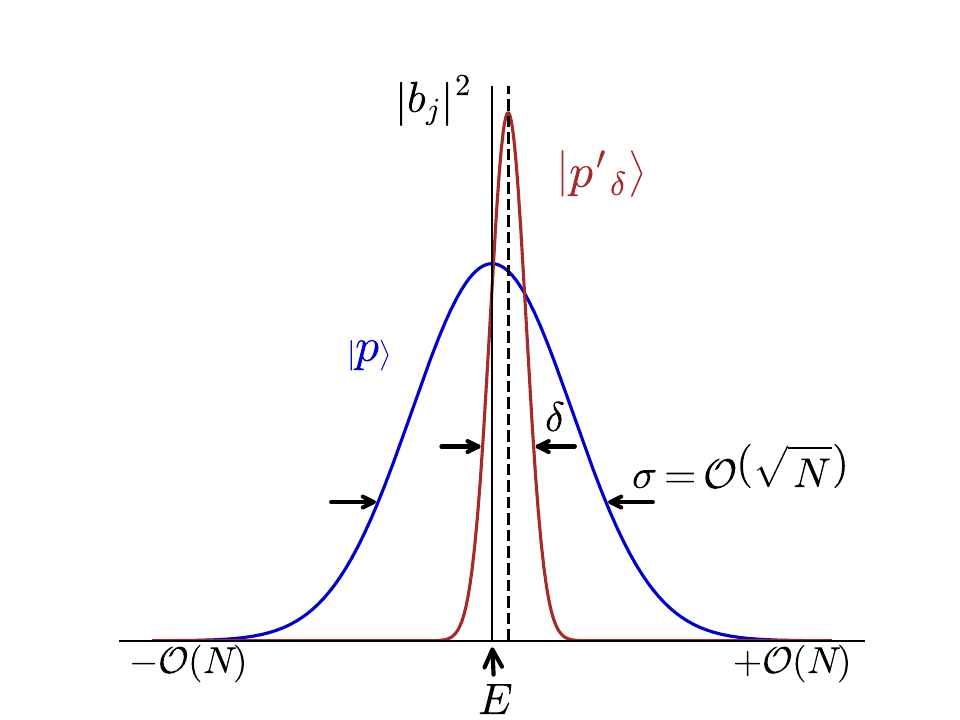}
\caption{ Illustration of the energy distribution of the initial product state, with standard deviation $\sigma$, and of the filtered distribution, with a smaller standard deviation $\delta$ and potentially a slightly shifted average (dashed line).}
\label{fig:filter}
\end{figure}
To do this, we first consider the \emph{cosine filter } \cite{ge2018faster,Banuls_2020}
\begin{equation}
    P_M(E) = \cos^M\left(\frac{H-E}{N}\right).
    \label{eq:cosine_filter}
\end{equation}
For an illustration of the effect of the filter on the energy distribution of the product state, see Fig. \ref{fig:filter}.

Since the eigenvalues of $\frac{H-E}{N}$ are within $[-1,1]$, the cosine function acts as an approximate projector around the energies close to $E$ when $M$ is very large. An advantage of this operator is that using the binomial expansion we can write
\begin{equation}\label{eq:cosine-filter2}
    \cos^M\left(\frac{H-E}{N}\right)= \frac{1}{2^{M}}\sum\limits_{m=-\frac{M}{2}}^{\frac{M}{2}}c_{m}e^{i2m(H-E)/N}.
\end{equation}
where $c_m = \binom{M}{\frac{M}{2}-m}$. There is some freedom as to how to define the denominator inside the cosine filter \cite{Yilun2022}, which does not change the final result significantly, so for simplicity we choose $N$ as in Eq. \eqref{eq:cosine_filter}. 

Eq. \eqref{eq:cosine-filter2} shows that when the full operator $P_M(E)$ is applied to a trial state, the result can be written as a superposition of time evolution operators. Additionally, the amount of complex exponentials can be significantly reduced for a small cost in precision. 

\begin{lemma}\label{def:grouped_cosine}
    Let the approximate cosine function $g_y(x)$ be
    \begin{equation}
        g_y(x) := \frac{1}{2^{M}}\sum\limits_{m=-y\sqrt{M}}^{y\sqrt{M}}{\binom{M}{\frac{M}{2}-m}e^{i2mx}}.
    \label{eq:gyx-def}
    \end{equation}
    This is such that, $\forall x \in [-1,1]$,
\begin{equation}
 \vert  \cos^M(x) - g_y(x) \vert \leq e^{-\frac{y^2}{2}}.
\end{equation}
\end{lemma}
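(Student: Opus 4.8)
The plan is to identify $\cos^{M}(x)-g_y(x)$ with the truncated tail of the binomial series \eqref{eq:cosine-filter2} and thereby reduce the claim to a concentration estimate for a fair-coin sum. Subtracting \eqref{eq:gyx-def} from \eqref{eq:cosine-filter2}, the two expansions agree for $|m|\le y\sqrt{M}$, so their difference consists exactly of the discarded terms,
\[
\cos^{M}(x)-g_y(x)=\frac{1}{2^{M}}\sum_{y\sqrt{M}<|m|\le M/2}\binom{M}{M/2-m}e^{i2mx}.
\]
The first, and easy, step is to make this bound uniform in $x$: because $|e^{i2mx}|=1$ for every real $x$, the triangle inequality gives
\[
\left|\cos^{M}(x)-g_y(x)\right|\le\frac{1}{2^{M}}\sum_{y\sqrt{M}<|m|\le M/2}\binom{M}{M/2-m},
\]
which no longer depends on $x$. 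This is precisely why the estimate holds for all $x\in[-1,1]$ (indeed for every $x$), and it reduces the problem to bounding a single combinatorial sum.

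Reading $\tfrac{1}{2^{M}}\binom{M}{M/2-m}$ as the distribution of $m=\tfrac{M}{2}-K$ with $K\sim\mathrm{Bin}(M,1/2)$, the remaining sum is exactly the probability that $M$ fair coin flips deviate from their mean by more than $y\sqrt{M}$. I would control it with a Chernoff bound: the exponential moment is $\mathbb{E}[e^{\lambda m}]=\cosh^{M}(\lambda/2)\le e^{M\lambda^{2}/8}$, so for any $\lambda>0$,
\[
\frac{1}{2^{M}}\sum_{m>y\sqrt{M}}\binom{M}{M/2-m}\le e^{-\lambda y\sqrt{M}+M\lambda^{2}/8},
\]
and optimizing at $\lambda=4y/\sqrt{M}$ yields $e^{-2y^{2}}$ for each one-sided tail. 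This decay is in fact stronger than required, and since $e^{-2y^{2}}\le e^{-y^{2}/2}$, the claimed bound follows.

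The reduction is immediate, so the only substance lies in the concentration estimate, and the single place that rewards care is the constant in the exponent, which must be clean for every $y\ge0$. A crude two-sided union of the one-sided Chernoff bounds carries a prefactor, $2e^{-2y^{2}}$, which actually exceeds $e^{-y^{2}/2}$ for small $y$; to obtain the stated form uniformly I would instead compare the binomial tail to a Gaussian and apply the sharp tail estimate $\int_a^{\infty}\tfrac{dt}{\sqrt{2\pi}}e^{-t^{2}/2}\le\tfrac12 e^{-a^{2}/2}$ on each side, whose two factors of $\tfrac12$ cancel the two-sidedness and leave $e^{-2y^{2}}\le e^{-y^{2}/2}$. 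Beyond this, the only remaining bookkeeping is the parity of $M$, which merely decides whether $m$ runs over integers or half-integers and shifts the endpoints of the truncated sum harmlessly. The main obstacle is thus not conceptual but the mild sharpness issue of keeping the right numerical constant through the discrete-to-continuous comparison so that the bound is valid for all $y$, not merely asymptotically.
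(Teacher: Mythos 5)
Your proposal is correct and takes essentially the same route as the paper, which justifies the lemma in one line by reducing the truncation error to the tail of the symmetric binomial distribution and invoking the Chernoff bound; your reduction via $\vert e^{i2mx}\vert=1$ and the optimized one-sided estimate $e^{-2y^2}$ are exactly that argument made explicit. Your extra care about the constant for small $y$ goes beyond what the paper does (the lemma is only ever invoked with $y=\Omega\bigl(\sqrt{\log(M^3/N^2)}\bigr)$, where $2e^{-2y^2}\le e^{-y^2/2}$ holds comfortably), and be aware that the Gaussian-domination step you sketch to repair the small-$y$ case is not a free standard fact for binomial tails at finite $M$, so the bound literally for all $y\ge 0$ would require a separate finite-$M$ argument.
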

This is obtained from the Chernoff bound on the binomial coefficients, which is exponentially decaying in $y^2$. For a significant reduction in the number of terms, we need $y$ to be $o(\sqrt{M})$. 

Lemma \ref{def:grouped_cosine} allows us to define our filtered state as
\begin{equation}
    \ket{p_{M,y}}= \frac{g_y\left(\frac{H-E}{N}\right) \ket{p}}{\norm{g_y\left(\frac{H-E}{N}\right) \ket{p}}},
    \end{equation}
which leads to the main technical result of this section.

\begin{lemma}\label{le:Variance_bound_general}
Let $M=\mathcal{O}\left(\frac{N}{\zeta_N^2}  \right)$ and $M =  \Omega(N)$, and $y= \Omega\left(\sqrt{\log \left(M^3/N^2 \right) } \right)$. The energy average $\mu$ and variance $\delta^2$ of $\ket{p_{M,y}}$ are bounded as
    \begin{align} \label{eq:avlemma}
    \vert \mu - E \vert & = 
    \mathcal{O}\left(\frac{N}{\sqrt{M}}\right),
    \end{align}
    \begin{align}\label{eq:varlemma}
        \delta^2 & = \mathcal{O}\left(\frac{N^2}{M}\right).
    \end{align}
\end{lemma}
This lemma only applies for large enough values of $M$. On the other hand, for $M=\mathcal{O}(N)$, the best bounds one can achieve are
\begin{align}
    \vert \mu - E \vert & =
    \mathcal{O}\left(\sqrt{N}\right), \\
        \delta^2 & = \mathcal{O}\left(N\right),
\end{align}
which means that the filter does not change the variance in a substantial way - at most up to a constant factor.

For large $M$, the lemma shows that the filter $g_y\left(\frac{H-E}{N}\right)$ decreases the energy variance of the state without changing the average energy too much. The proof is shown in App. \ref{app:proofMain}. It is based on the Berry-Esseen error from Definition \ref{le:BEerror}, which we use to approximate the expression of the variance $\delta^2$ as a Gaussian integral, with a considerable amount of error terms that need to be estimated. \alv{The overall argument is a way of segmenting energy sums in pieces, so that via the approximation of Lemma \ref{le:BEerror} they can be turned into pieces of a tractable Gaussian integral. A similar argument previously appeared in a different context in \cite{Hovhannisyan_2021}.} We note that this holds for Hamiltonians that are local, independent of the dimension. The upper bound on $M$ in terms of $\zeta_N^{-1}$ is due to the fact that the deviation from Gaussian limits how much we can resolve the energy distribution after the filter is applied: if the deviation is large, the filter may act in a more uncontrolled way, and the bound may not hold.  

In many practical scenarios we expect the wavefunction $\ket{p}$ to be symmetric about the energy average, in which case $\mu$ and $E$ will be identical. On the other hand, the scaling of the bound on $\delta$ is tight up to constant factors: if the wave-function coefficients $b_j$ are taken to be an exact Gaussian, one can explicitly calculate $\delta \simeq \frac{N}{\sqrt{2M}}$ (see Eq. (15) in \cite{Banuls_2020}).

The result is purposely stated in terms of the general error from Def. \ref{def:Improved_BE}. Lemma \ref{le:BEerror}, however, allows us to establish that
choosing $ M \propto N^2$, the bound on the variance is
\begin{equation}
  \delta^2 = \mathcal{O}(1).
\end{equation}

This means that for any initial product state with standard deviation of $\mathcal{O}(\sqrt{N})$, the cosine filter can be applied to reduce the standard deviation down to $\mathcal{O}(1)$, and shifting the average by at most that amount. 
\ksr{Note that a similar variance bound (Eq.~\eqref{eq:varlemma}) was estimated in \cite{Banuls_2020}. 
In this section, we have provided a rigorous proof for it, and showed that it is only valid for particular range of values of $M$ in the cosine filter, which depend on $\zeta_N$, quantity appearing from the Berry-Esseen theorem.}
\alv{Intuitively, this is because, in order to estimate the variance $\delta$, we need a high resolution of the energy populations $\vert b_j \vert^2$, which is only possible if $\zeta_N$ is small enough.}
\ksr{Larger values for $M$ are responsible for filtering a narrower range of energies by implementing an approximate eigenstate projector around energy $E$. As a result, the norm of the final state becomes exceedingly smaller with increasing $M$. However, to retain the Gaussian approximation from the Berry-Esseen theorem (Lemma~\eqref{le:BEerror}), we require the norm of the filtered state to be above a certain constant ($=\mathcal{O}(1)$), which forces an upper bound on $M$.}
The examples from Sec. \ref{sec:product} show that this is the smallest variance one can reach in general (see Fig. \ref{fig:binomial}). However, if the factor $\zeta_N$ would decrease more quickly than $\mathcal{O}(N^{-1/2})$, the range of values which $M$ can take becomes larger, and lower variances can be achieved, as we discuss in Sec. \ref{sec:ETH}. 


\section{Low variance state using MPS}\label{sec:algo}

In this section we prove our main result on the approximability of the \alv{low variance} filtered states with matrix product states. \ksr{To do this, we study the performance of the cosine filter algorithm for preparing arbitrarily low energy variance states in the finite energy density regime of a 1D local Hamiltonian. We reiterate that producing these states is interesting for the following reasons: $i)$ understanding the fundamental tradeoffs between entanglement and energy variance in the bulk of the spectrum, $ii)$ producing the simplest possible states that can approximate thermal properties locally,
and $iii)$ preparing initial states for filtering quantum algorithms, designed for constructing low-variance states inaccessible using tensor network techniques.
} 


The main idea is that
it is possible to construct a matrix product operator $g_y^D$ that approximates $g_y\left(\frac{H-E}{N}\right)$ in one spatial dimension. When applied to our initial product state $\vert p \rangle$, this implies that one can efficiently find an MPS approximation to $\vert p_{M,y} \rangle$ with a close enough average and variance in energy, while also having a small bond dimension. This is contained in our main result Theorem \ref{th:main}.

The starting point is the following lemma from \cite{Kuwahara_2021}, which shows there is a Matrix Product Operator (MPO) \cite{Pirvu_2010} approximation to the matrix exponential of $H$ .

\begin{lemma}\label{le:MPOt}\cite{Kuwahara_2021} Let $H$ be a 1D local Hamiltonian. There is an algorithm that outputs an MPO approximation $T_{t}$ of $e^{-i t H}$ such that
\begin{equation}\label{eq:time_evolution_approximation}
\norm{ T_{t} - e^{-it H}} \le \epsilon,
\end{equation}
with bond dimension upper bounded by
\begin{equation}\label{eq:MPO_bond_dimension}
D \le e^{\mathcal{O}(t) + \mathcal{O}\left(\sqrt{t \log(N/\epsilon)} \right) }.
\end{equation}
\end{lemma}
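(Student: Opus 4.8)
The plan is to prove the bound cut by cut, using that the bond dimension of an MPO at a given bond equals the operator Schmidt rank of $e^{-itH}$ across that cut. I would therefore fix a bond between sites $x$ and $x+1$, bound the Schmidt rank there by $e^{\mathcal{O}(t)+\mathcal{O}(\sqrt{t\log(N/\epsilon)})}$ up to a small truncation error, and then assemble the global MPO by truncating every bond to its leading Schmidt vectors. Since the truncation errors add up over the $\mathcal{O}(N)$ cuts, demanding accuracy $\epsilon/N$ per cut yields total error $\epsilon$, which is exactly where the factor $N$ inside the logarithm will originate.

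For a single cut, the main device is the interaction picture with respect to the decoupled Hamiltonian. Writing $H = H_L + H_R + h_\partial$, where $H_L$ and $H_R$ collect the terms supported entirely on the left/right of the cut and $h_\partial$ is the single term crossing it, I would compare $U = e^{-itH}$ to $U_0 = e^{-itH_L}\otimes e^{-itH_R}$, which has Schmidt rank one across the cut. Their ratio $W = U_0^\dagger U$ obeys $\dot W = -i\,\tilde h_\partial(s)\,W$ with $\tilde h_\partial(s) = U_0^\dagger h_\partial U_0$, so $W$ is a time-ordered exponential of the boundary term evolved in the decoupled frame. By a Lieb-Robinson bound, $\tilde h_\partial(s)$ is quasi-local about the cut for all $s \le t$; truncating it to a patch of width $\ell$ makes the corresponding $\widetilde{W}$ supported on $\mathcal{O}(\ell)$ sites, hence of Schmidt rank at most $e^{\mathcal{O}(\ell)}$. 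Because $U_0$ does not increase the rank, $U_0\widetilde{W}$ approximates $U$ with the same rank bound.

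The quantitative heart is the choice of patch width. I would use a Lieb-Robinson bound with a factorial (super-exponential) tail, so that the leakage of $\tilde h_\partial(s)$ beyond distance $\ell$ is controlled by a quantity of the form $(vt)^\ell/\ell!$. Setting this $\le \epsilon/N$ and solving with Stirling gives $\ell = \mathcal{O}(t) + \mathcal{O}(\sqrt{t\log(N/\epsilon)})$: the linear-in-$t$ piece is the ballistic light cone, while the square-root correction is precisely the distance past the light cone needed to suppress the tail to the target accuracy. Substituting into $e^{\mathcal{O}(\ell)}$ then reproduces the stated bond dimension.

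I expect the main obstacle to be making the interaction-picture/patch-truncation argument rigorous while retaining the sharp $\sqrt{t}$ (rather than linear-in-$t$) dependence in the correction. This needs a Lieb-Robinson estimate with exactly the right Gaussian/factorial tail, together with careful bookkeeping that truncating the time-ordered exponential — integrated over $s\in[0,t]$ and composed across many infinitesimal steps — does not accumulate extra factors that would degrade the scaling, and that the per-cut truncations combine into a single globally valid MPO.
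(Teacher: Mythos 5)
The paper does not actually prove this lemma---it is imported verbatim from \cite{Kuwahara_2021}---so there is no in-paper argument to measure you against; what follows assesses your proposal on its own terms. Your high-level strategy (bond dimension via per-cut operator Schmidt rank, interaction picture with respect to $H_L+H_R$, Lieb--Robinson localization of the evolved boundary term $\tilde h_\partial(s)$, and an $\epsilon/N$ budget per cut to produce the $N$ inside the logarithm) is a sensible and well-known route to MPO approximations of $e^{-itH}$. However, it cannot deliver the stated bound, and the obstruction sits exactly at the step you yourself identify as the quantitative heart.

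The gap is the Stirling step. Solving $(vt)^{\ell}/\ell!\le \epsilon/N$ amounts to requiring $\ell\log\bigl(\ell/(evt)\bigr)\ge \log(N/\epsilon)$, whose solution in the relevant regime $\log(N/\epsilon)\gg t$ is $\ell=\Theta\bigl(\log(N/\epsilon)/\log(\log(N/\epsilon)/t)\bigr)$, which is parametrically \emph{larger} than $\sqrt{t\log(N/\epsilon)}$: for $t=\mathcal{O}(1)$ and $\epsilon=1/\mathrm{poly}(N)$ you get $\ell\sim\log N/\log\log N$ rather than the claimed $\sqrt{\log N}$. To extract $\ell=\mathcal{O}(t)+\mathcal{O}(\sqrt{t\log(N/\epsilon)})$ from this argument you would need a Gaussian tail $e^{-c\ell^2/t}$ for the leakage of $\tilde h_\partial(s)$ beyond distance $\ell$, and no such Lieb--Robinson estimate exists for generic short-range Hamiltonians: the leading contribution at distance $\ell$ first appears at order $t^{\ell}$ in the Dyson expansion, and already for free fermions the propagator amplitude at distance $\ell\gg t$ behaves like a Bessel function $J_\ell(2t)\sim (et/\ell)^{\ell}$, which greatly exceeds $e^{-c\ell^2/t}$ in that regime. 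So the factorial profile is essentially tight, and your construction can only give $D\le e^{\mathcal{O}(t)+\mathcal{O}(\log(N/\epsilon)/\log(1+\log(N/\epsilon)/t))}$ --- still subpolynomial in $N$ for the parameters used in Theorem \ref{th:main}, but strictly weaker than Eq.~\eqref{eq:MPO_bond_dimension}. The $\sqrt{t\log(N/\epsilon)}$ term in \cite{Kuwahara_2021} does not come from a spatial light-cone tail at all; it arises from directly bounding the operator Schmidt rank of a truncated expansion of the propagator, exploiting that the Schmidt rank of powers of a 1D Hamiltonian across a cut grows subexponentially in the degree. A secondary, more repairable issue is your assembly step: truncating the Schmidt decomposition independently at every bond controls the error in the Frobenius norm, whereas Eq.~\eqref{eq:time_evolution_approximation} demands operator-norm accuracy, and passing between the two costs a dimension factor unless the MPO is built constructively from local pieces.
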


The idea here is that $g_{y}$ can be written as a sum of complex exponentials as per Eq. \eqref{eq:cosine-filter2}. Thus, we can approximate each term in the filter with Lemma \ref{le:MPOt}, and add them to obtain our approximate filter 
\begin{equation}
g_y^{D}= \frac{1}{2^{M}}\sum\limits_{m=-y\sqrt{M}}^{m=+y\sqrt{M}}{\binom{M}{\frac{M}{2}-m}T_{2m/N}}.
\end{equation}
When applied to the initial product state, this yields the final MPS with bounded variance and bond dimension. The precise result is as follows.

\begin{theorem}\label{th:main}
Let $H$ be one-dimensional, and let $\ket{p}$ satisfy Assumptions \ref{ass1}. The MPS $\vert p'_{\delta} \rangle=\frac{g_y^D\ket{p}}{\vert \vert g_y^D\ket{p} \vert \vert}$, under the conditions of Lemma \ref{le:Variance_bound_general}, has the properties
\begin{align} \label{eq:av1}
 \vert \langle p'_{\delta} \vert H  \vert p'_{\delta} \rangle - E \vert & =  \mathcal{O}\left(\frac{N}{\sqrt{M}} \right) \\ \label{eq:var1}
  \langle p'_{\delta} \vert H^2  \vert p'_{\delta} \rangle- \left(\langle p'_{\delta} \vert H  \vert p'_{\delta} \rangle\right)^2 &\le  2 \delta^2 = \mathcal{O}\left(\frac{N^2}{M} \right).
\end{align}
Moreover, the bond dimension of $g_y^D$ (and thus of $\ket{p'_\delta}$) is bounded by
    \begin{equation}\label{eq:BDBound}
        D \le \frac{N\log N}{\delta}e^{\mathcal{O}\left(\frac{\sqrt{\log (N/\delta)}}{\delta}\right ) + \mathcal{O}\left(\frac{\log^{\frac{3}{4}} (N /\delta) } {\sqrt{\delta}}\right)}.
    \end{equation}
\end{theorem}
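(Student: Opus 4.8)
The plan is to treat Theorem \ref{th:main} as an error-propagation statement layered on Lemma \ref{th:Variance_bound_general}. That lemma already certifies that the \emph{exact} filtered state $\ket{p_{M,y}}\propto g_y\!\left(\frac{H-E}{N}\right)\ket{p}$ has average within $\mathcal{O}(N/\sqrt{M})$ of $E$ and variance $\delta^2=\mathcal{O}(N^2/M)$; the task is therefore to show that replacing the exact filter by its MPO realization $g_y^D$ perturbs these two quantities by at most a constant factor, and then to track the resulting bond dimension. The first step is an operator-norm bound. Writing the exact Hermitian filter as $\frac{1}{2^M}\sum_m \binom{M}{M/2-m}e^{i2m(H-E)/N}$ and building $g_y^D$ by approximating each exponential $e^{i2mH/N}$ with the MPO $T_{2m/N}$ of Lemma \ref{le:MPOt} to accuracy $\epsilon$ (folding the scalar phases $e^{-i2mE/N}$ into the coefficients), the triangle inequality, together with the fact that the normalized binomial weights sum to at most $1$, gives
\begin{equation}
\norm{g_y^D - g_y\!\left(\tfrac{H-E}{N}\right)} \le \frac{1}{2^M}\sum_{m=-y\sqrt M}^{y\sqrt M}\binom{M}{\tfrac M2 - m}\,\norm{T_{2m/N} - e^{i2m H/N}} \le \epsilon .
\end{equation}
As unnormalized vectors this yields $\norm{g_y^D\ket{p}-g_y\!\left(\tfrac{H-E}{N}\right)\ket{p}}\le\epsilon$.

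The crux is converting this unnormalized closeness into closeness of the normalized expectation values, which demands a lower bound on the filtered norm $n\equiv\norm{g_y\!\left(\tfrac{H-E}{N}\right)\ket{p}}$. Since $g_y$ agrees with $\cos^M$ up to $e^{-y^2/2}$ (Lemma \ref{def:grouped_cosine}) and the energy density of $\ket{p}$ is approximately Gaussian of width $\sigma=s\sqrt N$ (Lemma \ref{le:BEerror}), a saddle-point estimate of $\sum_j|b_j|^2\cos^{2M}\!\left(\tfrac{E_j-E}{N}\right)$ gives $n^2=\Omega\!\left(N/(\sigma\sqrt M)\right)=\Omega(\sqrt{N/M})$, hence $n=\Omega((N/M)^{1/4})$. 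A standard perturbation bound for normalized states then shows that, for any observable $O$, the expectation values in $\ket{p'_\delta}$ and $\ket{p_{M,y}}$ differ by $\mathcal{O}(\norm{O}\epsilon/n)$. Taking $O=H$ ($\norm H\le N$) and $O=H^2$ ($\norm{H^2}\le N^2$), the extra error in the average is $\mathcal{O}(N\epsilon/n)$ and in the variance $\mathcal{O}(N^2\epsilon/n)$. Choosing the per-term accuracy $\epsilon=\Theta(n/M)$ makes the variance error $\mathcal{O}(N^2/M)=\mathcal{O}(\delta^2)$ — so the total variance is $\le 2\delta^2$ — while leaving the average shift at $\mathcal{O}(N/\sqrt M)$, exactly Eqs.~\eqref{eq:av1}-\eqref{eq:var1}. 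Conveniently, the unknown $n$ cancels in the variance error.

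Finally I would bound the bond dimension. A sum of MPOs has bond dimension at most the sum of the summands', so $D\le(\text{number of terms})\times\max_m D_m$, with at most $2y\sqrt M$ terms and largest evolution time $t_{\max}=2y\sqrt M/N$. Using $M=\Theta(N^2/\delta^2)$ (so $\sqrt M=\Theta(N/\delta)$), $y=\Theta(\sqrt{\log(N/\delta)})$ from the hypotheses of Lemma \ref{th:Variance_bound_general}, and $\epsilon=\Theta(n/M)$ — which keeps $\log(N/\epsilon)=\mathcal{O}(\log(N/\delta))$ — one finds $t_{\max}=\Theta(\sqrt{\log(N/\delta)}/\delta)$. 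Substituting into Eq.~\eqref{eq:MPO_bond_dimension} reproduces the two exponents $\mathcal{O}(\sqrt{\log(N/\delta)}/\delta)$ and $\mathcal{O}(\log^{3/4}(N/\delta)/\sqrt\delta)$ of Eq.~\eqref{eq:BDBound}, with the polynomial prefactor $2y\sqrt M=\mathcal{O}(\tfrac{N}{\delta}\sqrt{\log(N/\delta)})$ absorbed into $\tfrac{N\log N}{\delta}$.

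I expect the main obstacle to be this norm lower bound and the error amplification it controls: because the filter deliberately concentrates weight in a narrow energy window, $n$ is small (decaying like $(N/M)^{1/4}$), so expectation-value errors are amplified by $1/n$. The delicate requirement is that forcing $\epsilon$ small enough to tame this amplification must not inflate $\log(N/\epsilon)$ beyond $\mathcal{O}(\log(N/\delta))$; checking this self-consistently — and handling the $H^2$ contribution, whose $\norm{H^2}\le N^2$ prefactor is the least forgiving — is the technical heart of the argument.
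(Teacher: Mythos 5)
Your proposal follows essentially the same route as the paper's proof: a triangle-inequality operator-norm bound $\norm{g_y^D-g_y}\le\epsilon$ from Lemma \ref{le:MPOt}, a lower bound on the filtered norm (the paper imports it from Eq.~\eqref{eq:M21}, and your $n=\Omega((N/M)^{1/4})$ is in fact the correct reading of that squared-norm bound), the same choice $\epsilon=\Theta(n\delta^2/N^2)=\Theta(n/M)$ to keep the variance below $2\delta^2$, and the same term-counting plus $t_{\max}=\mathcal{O}(y\sqrt M/N)$ substitution into Eq.~\eqref{eq:MPO_bond_dimension} with $y=\Theta(\sqrt{\log(N/\delta)})$ for the bond dimension. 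No gaps; you also correctly identify the norm lower bound as the technical crux.
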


\begin{proof}


First we have that, by Lemma \ref{le:MPOt} and the triangle inequality,
\begin{align}\label{eq:filter_approx_error}
\begin{split}
    \norm{g_y \left(\frac{H-E}{N} \right) - g_y^{D}}&\leq 2^{-M}\sum_{\vert m\vert\leq y\sqrt{M}}\binom{M}{\frac{M}{2}-m}\epsilon
    \\
    &\leq\epsilon,
\end{split}
\end{align}
where we consider the approximation error $\epsilon$ coming from Eq. \eqref{eq:time_evolution_approximation}. Thus, for the states
\begin{align}
 \vert \vert \ket{p_{M,y}}   - \ket{p'_{\delta}} \vert \vert &= \vert \vert \frac{g_y \left(\frac{H-E}{N} \right)\ket{p}}{\norm{g_y \left(\frac{H-E}{N} \right)\ket{p}}}  - \frac{g_y^{D}\ket{p}}{\norm{g_y^{D}\ket{p}}} \vert \vert 
 \\ & \le 2 \frac{\norm{g_y \left( \frac{H-E}{N} \right) - g_y^{D}} }{\norm{g_y \left(\frac{H-E}{N}  \right)\ket{p}}} \\
 & \le   \frac{2 \ \epsilon}{\norm{g_y \left(\frac{H-E}{N}  \right)\ket{p}}}.
\end{align}
The factor $\norm{g_y \left(\frac{H-E}{N}  \right)\ket{p}}$ is lower bounded in Eq. \eqref{eq:M21} as, 


\begin{align}
    \norm{g_y\left(\frac{H-E}{N}\right) \ket{p}}=\Omega\left(\frac{N^{1/2}}{M^{1/4}}\right).
\end{align}
Now, notice that  for $\ket{p_{M,y}}$ and $\ket{p'_{\delta}}$ to have the same average and variance up to the errors of Eq. \eqref{eq:av1} and \eqref{eq:var1} it is enough that,
\begin{equation} \label{eq:normbound}
    \vert \vert \ket{p_{M,y}}   - \ket{p'_{\delta}} \vert \vert \le  \frac{\delta^2}{2 N^2}.
\end{equation}
 Given Eq. \eqref{eq:normbound}, the error we require from Lemma \ref{le:MPOt} is $\epsilon=\mathcal{O}\left(\frac{\delta^2}{N^{3/2}M^{1/4}}\right)$. With this, notice that the leading error for the average is given by Eq. \eqref{eq:avlemma}.

To estimate the total bond dimension of $g_y^D$, notice that we have $2y\sqrt{M}$ terms, and that, considering Lemma \ref{le:Variance_bound_general}, the longest $t$ to consider is $y\sqrt{M}/N =  \mathcal{O}(y \delta^{-1})$. Using Lemma \ref{le:MPOt} and simplifying the powers in the logarithm, we can upper bound
\begin{align}\label{eq:final_bond_dimension}
    D \le \frac{2y N}{\delta} e^{\mathcal{O}\left(y\delta^{-1}\right ) + \mathcal{O}\left(\sqrt{y \delta^{-1} \log \left(\frac{N}{\delta} \right)}\right)}
\end{align}

Finally, choosing $y^2=6\log \frac{N}{\delta}$ as allowed by Lemma \ref{le:Variance_bound_general}, we obtain
\begin{align}
\label{eq:bond_dimension_N_delta}
    D\le \frac{N\log N}{\delta}e^{\mathcal{O}\left(\frac{\sqrt{\log (N/\delta)}}{\delta}\right ) + \mathcal{O}\left(\frac{\log^{\frac{3}{4}} (N /\delta) } {\sqrt{\delta}}\right)}.
\end{align}

\end{proof}

Additionally, the run-time of the algorithm that constructs $\vert p'_{\delta} \rangle$ is simply the bond dimension multiplied by a factor of $N$, to account for the cost of manipulating the tensors of the MPO \cite{Kuwahara_2021}. \alv{A similar estimate of the bond dimension was given in \cite{Banuls_2020}, but their argument omitted some key steps in the analysis, such as the effect of using the Chernoff bound, and the explicit distance of the wavefunction to a Gaussian, which appears through Lemma \ref{le:Variance_bound_general}.}

The most important particular case of this result is if we aim for a variance $\delta= \mathcal{O}(1)$, in which case we have the following.
\begin{corollary}\label{co:constant_variance}
   There exists an MPS $\ket{p'}$ of quasilinear bond dimension 
$ D \le N e^{\mathcal{O}(\log^{\frac{3}{4}} (N))}$
   such that
   \begin{align}
 \vert \langle p' \vert H  \vert p' \rangle - E \vert &=  \mathcal{O}(1) \\
  \langle p' \vert H^2  \vert p' \rangle- (\langle p' \vert H  \vert p' \rangle)^2 &=  \mathcal{O}(1).
\end{align}
\end{corollary}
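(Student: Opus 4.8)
The plan is to obtain this as a direct specialization of Theorem \ref{th:main} to the regime $\delta = \Theta(1)$, so the only work is choosing the filter strength $M$ appropriately, verifying that this choice is compatible with the hypotheses of Lemma \ref{th:Variance_bound_general}, and simplifying the bond-dimension expression in Eq. \eqref{eq:BDBound}.

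First I would set $M = \Theta(N^2)$. Substituting into the variance bound of Eq. \eqref{eq:var1} gives $\mathcal{O}(N^2/M) = \mathcal{O}(1)$, and into the average bound of Eq. \eqref{eq:av1} gives $\mathcal{O}(N/\sqrt{M}) = \mathcal{O}(1)$; these are precisely the two claimed estimates. It then remains to confirm that $M = \Theta(N^2)$ satisfies the three conditions of Lemma \ref{th:Variance_bound_general}. The lower bound $M = \Omega(N)$ is immediate. The requirement $y = \Omega(\sqrt{\log(M^3/N^2)})$ reduces, since $M^3/N^2 = \Theta(N^4)$, to $y = \Omega(\sqrt{\log N})$, which is the value chosen in the proof of Theorem \ref{th:main}. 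The decisive constraint is the upper bound $M = \mathcal{O}(N/\zeta_N^2)$: here I would invoke Lemma \ref{le:BEerror}, which gives the optimal scaling $\zeta_N = \mathcal{O}(N^{-1/2})$ and hence $N/\zeta_N^2 = \Omega(N^2)$, so that $M = \Theta(N^2)$ is admissible.

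Finally I would evaluate Eq. \eqref{eq:BDBound} at $\delta = \Theta(1)$. The exponent becomes $\mathcal{O}(\log^{1/2} N) + \mathcal{O}(\log^{3/4} N)$, dominated by the second term since $\log^{1/2} N = o(\log^{3/4} N)$, while the prefactor is $\mathcal{O}(N \log N)$; writing $\log N = e^{\log \log N}$ with $\log \log N = o(\log^{3/4} N)$ lets the logarithmic factor be absorbed into $e^{\mathcal{O}(\log^{3/4} N)}$, yielding the quasilinear bound $D \le N e^{\mathcal{O}(\log^{3/4} N)}$. There is no real obstacle beyond bookkeeping; the one point deserving care is that the compatibility condition $M = \mathcal{O}(N/\zeta_N^2)$ is exactly saturated by the optimal Berry-Esseen rate of Lemma \ref{le:BEerror}, and it is precisely this optimality that allows $M$ to be pushed to $\Theta(N^2)$ and the variance all the way down to a constant.
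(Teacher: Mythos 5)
Your proposal is correct and follows essentially the same route as the paper, which obtains the corollary by setting $M \propto N^2$ in Theorem \ref{th:main} together with Lemma \ref{th:Variance_bound_general}. Your additional bookkeeping --- verifying $M=\mathcal{O}(N/\zeta_N^2)$ via Lemma \ref{le:BEerror} and absorbing the $\log N$ prefactor of Eq. \eqref{eq:BDBound} into $e^{\mathcal{O}(\log^{3/4}N)}$ --- simply makes explicit what the paper leaves implicit.
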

This is achieved by choosing $M \propto N^2$ in the filter, given Theorem \ref{th:main} together with Lemma \ref{le:Variance_bound_general}. This is the smallest variance that can be achieved in general, as illustrated by the examples with the binomial energy distribution in Eq. \ref{eq:binom}. 

If instead of a product state our initial state has exponential decay of correlations, the corresponding standard deviation one can prove to reach in general is $\delta=\mathcal{O}(\log^{2D}N)$ as per the result in \cite{Brandao_2015G}. Additionally, if the Berry-Esseen error $\zeta_N$ is much smaller than the upper bound of Lemma \ref{le:BEerror} we can also reach smaller variances, down to $\zeta_N^2 \times N$, at an additional cost in the bond dimension. In particular, we expect \alv{that in practice} it is most often possible to reach $\delta^2\propto 1/\log{N}$ while still keeping the bond dimension from Eq. \eqref{eq:BDBound} polynomial in $N$.




Theorem \ref{th:main} is restricted to one dimension. For higher dimensions, the existence of a similar PEPO is guaranteed by the results of \cite{Molnar_2015}, which in our setting yields a bond dimension
\begin{align}\label{eq:bdPEPO}
D =\left( \frac{N}{\delta} \right)^{O (\frac{\sqrt{\log N}}{\delta})}.
\end{align}
This, however, is only polynomial for larger variances $\delta^2 = \Omega(\log N)$ and in no case guarantees a provably efficient approximation algorithm, considering the difficulty of contracting PEPS \cite{Schuch_2007}. 

The upper bounds on the bond dimension also guarantee bounds on the entanglement entropy of the state $\ket{p'_\delta}$. Defining $S(\rho'_A)=-\tr[\rho'_A \log \rho'_A]$ with $\rho'_A$ the marginal on region $A$, we have that $S(\rho'_A) \le \vert \partial_A \vert \log D$, with $\vert \partial_A \vert$ the number of bonds between region $A$ and its complement. In fact, this upper bound applies to all R\'enyi entanglement entropies $S_\alpha(\rho'_A)=\frac{1}{1-\alpha} \log\tr[\rho'^\alpha_A ]$ with $\alpha>0$. 

Also notice that, by the Alicki-Fannes inequality together with Eq. \eqref{eq:normbound}, the upper bound on the entanglement entropy also holds for the state $\ket{p_{M,y}}$, in which we have applied the exact filter $g_y \left(\frac{H-E}{N} \right)$. Overall, we can conclude that in 1D, both $\ket{p'_\delta}$ and $\ket{p_{M,y}}$ (with its corresponding marginal $\rho_A$) have an entanglement entropy on a region $A$  bounded by
\begin{align}
  S(\rho'_A), &S(\rho_A) \le  \mathcal{O}\left(\log (N/\delta)\right )+
  \\& 
  +\mathcal{O}\left(\frac{\sqrt{\log (N/\delta)}}{\delta}\right )+\mathcal{O}\left(\frac{\log^{\frac{3}{4}} (N /\delta) } {\sqrt{\delta}}\right). \nonumber
\end{align}
For instance, for $\delta = \Omega(1)$ we obtain the bound $S \le \mathcal{O}(\log N)$,
which is significantly smaller than the largest possible volume law scaling $S(\rho_A) \propto N$.

\section{ETH Hamiltonians}\label{sec:ETH}

So far, we have expressed many of our results, and in particular the range of $\delta$ achievable in Lemma \ref{le:Variance_bound_general}, without lower bounding the value of $\zeta_N$. Given the examples of Sec. \ref{sec:product}, the error $\mathcal{O}(N^{-1/2})$ in Lemma \ref{le:BEerror} is the strongest general upper bound on $\zeta_N$. However, in most systems of interest much more favourable scalings should hold, as we now illustrate. 

Let $S_2(E_j)=-\log \tr{\rho_{A,j}^2}$ be the R\'enyi-$2$ entanglement entropy of eigenstate $\ket{E_j}$ for an arbitrary bipartition into subsystems $A,B$ with $\vert A \vert ,\vert B \vert\propto N$. It was shown in \cite{Wilming_2019} that, given that $\ket{p}$ is a product state,
\begin{equation}\label{eq:state_populations}
 \vert b_j \vert^2 \le e^{-\frac{S_2(E_j)}{2}}.
\end{equation}
When the entanglement in the energy eigenstates is a volume law, as is generically expected \cite{PageEntanglement,Vidmar_2017}, $S_2(E_j) \propto N$ and individual populations $\vert b_j \vert^2$ are exponentially small. This means that counterexamples such as Eq.~\eqref{eq:binom}, in which the individual populations are as large as $\propto N^{-1/2}$, are ruled out. 
\begin{figure}[t!]
\centering
\includegraphics[trim=0.02cm 0.2cm 0 0, clip,scale=0.55]{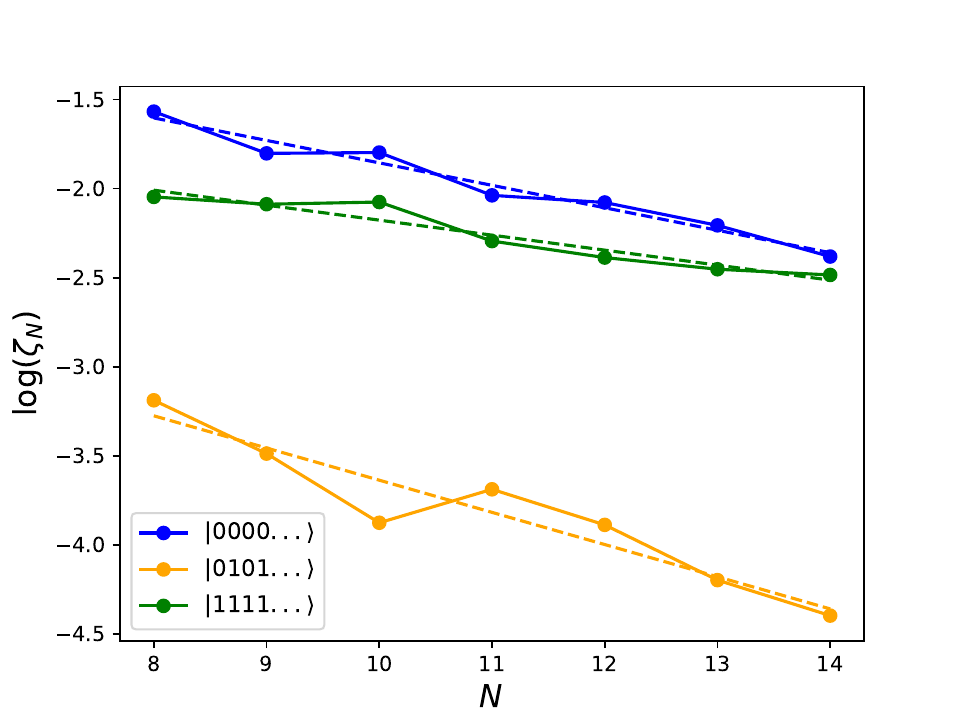}
\caption{\ksr{Illustration of the scaling of the logarithm of Berry-Esseen error $\zeta_N$ with the size of the system $N$ for the Mixed Field Ising Hamiltonian as defined in Eq.~\eqref{eq:mixed_field_ising} with the parameters $(J,g,h) = (1,-1.05,0.5)$.
An approximately linear decrease in $\log\zeta_N$ is observed even for finite system sizes (8 to 14 sites).}}
\label{fig:be_error_scaling}
\end{figure}
In these rather ubiquitous Hamiltonians, the product states thus have support in exponentially many eigenstates, and with exponentially small populations. This means that $J(x)$ in Def. \ref{def:Improved_BE} may be exponentially close to the smooth function $G(x)$, and in particular, we expect that the Berry-Esseen error is exponentially small in system size $\zeta_N = e^{-\Omega (N)}$. 
This is consistent with the numerical results in \cite{Banuls_2020}.
\ksr{In Fig.~\ref{fig:be_error_scaling} we plot the scaling of $\log\zeta_N$ with $N$ for the mixed field Ising model Hamiltonian defined as follows,
\begin{align}\label{eq:mixed_field_ising}
    H = \sum_{i=1}^{N-1}J\sigma^z_i\sigma^z_{i+1}+g\sigma^x_i+h\sigma^z_i,
\end{align}
where we chose the values of $(J,g,h) = (1,-1.05,0.5)$ for the non-integrable regime where ETH is expected to hold. 
For various choices of product states, we observe approximately linearly decreasing values for $\log\zeta_N$ with increasing $N$ even for finite system sizes, consistent with an exponential decay of $\zeta_N$ as theoretically expected from Eq.~\eqref{eq:state_populations}.
}

With such a favourable scaling, Lemma \ref{le:Variance_bound_general} allows one to increase the filter parameter \ksr{up to $M \propto e^{\Omega(N\log N)}$} in order to achieve vanishingly small variances \ksr{given Eq.~\eqref{eq:varlemma}. However in this case, the bond dimension (Eq.~\eqref{eq:BDBound}) restricts the minimum possible variance.} In particular, a variance of $\delta^2 \propto 1/\log N$ can be reached, which converges to zero in the thermodynamic limit, while still guaranteeing a polynomial bond dimension in Eq. \eqref{eq:BDBound}. 

It is also in these ETH Hamiltonians that we expect that states of low energy variance will have small subsystems resembling those of the Gibbs distribution. Considering $D \propto \text{poly}(N)$, and a corresponding entanglement entropy $\propto \log D$, subsystems of $\mathcal{O}(\log D)$ particles will display large amounts of entanglement entropy. Thus, it is possible that the filtered state displays approximately thermal marginals of up to that size. Along these lines, and depending on the observables considered, previous numerical analyses suggest one might need up to $\delta^2 \propto 1/\log^2 N$ \cite{Banuls_2020,Lu2021} (see in particular page 6 of \cite{Banuls_2020}). 
Additionally, the analytical results from \cite{Dymarsky2019} suggesting that for more general observables, variances vanishing as $\delta \propto 1/\text{poly}(N)$ might be needed.   


\section{Conclusion}

We have shown that for arbitrary systems evolving under local Hamiltonians, it is possible to efficiently construct states with variance $\delta = \Omega(1)$ via MPS representations, while shifting the average at most by a comparable amount. This is the smallest possible variance one could achieve, due to existing counterexamples. However, we expect that significantly lower values of the variance can be reached in most cases of interest, with the complexity of the algorithm increasing accordingly. In particular, in many models, $\delta=\Omega(1/\sqrt{\log N})$ can still be achieved efficiently.

With our main results, we provide rigorous analytical bounds on the classical simulability of the algorithm proposed in \cite{Banuls_2020,Lu2021}, which has recently been implemented with tensor networks \cite{Yilun2022}. We expect that this type of scheme will be important in near-term quantum simulation experiments of equilibrium and non-equilibrium properties of quantum many-body systems. Specifically, we show that to perform calculations beyond the reach of classical computers, these will have to go beyond one dimension or reach very small energy variances of at least $\delta=o(1/\sqrt{\log N})$. \alv{
These low energy-variance and MPS representable states are easy to prepare with quantum circuits. Thus they can serve as starting points of a filtering quantum algorithm, that will further reduce their variance and address quantities that are difficult to get with tensor network techniques. 
By starting with states with relatively low variance, one can conserve quantum computational resources. 
We hope that this further motivates experimental efforts on quantum simulation, such as \cite{hemery2023measuring}.}

The key to our method of analyzing the filtered state is the knowledge of the gaussianity of the wavefunction granted by the Berry-Esseen theorem. This allows us to, for instance, lower bound the norm of a product state to which an operator has been applied to (in this case, the filter $g_y$). \alv{It would be interesting to see if our proof techniques can be applied to other types of operators, such as the Chebyshev filter proposed in \cite{Banuls_2020}.}
We expect that this type of argument will have further applications in the study of classical and quantum algorithms for many body systems at finite energy density. 

\acknowledgements
The authors acknowledge useful discussions with Yilun Yang. 
AMA acknowledges support from the Alexander von Humboldt foundation and the Spanish Agencia Estatal de Investigacion through the grants ``IFT Centro de
Excelencia Severo Ochoa CEX2020-001007-S" and ``Ram\'on
y Cajal RyC2021-031610-I", financed by
MCIN/AEI/10.13039/501100011033 and the European
Union NextGenerationEU/PRTR. 
KSR acknowledges support from the European Union’s Horizon research and innovation programme through the ERC StG FINE-TEA-SQUAD (Grant No. 101040729). 
Views and opinions expressed are however those of the author(s) only and do not necessarily reflect those of the funding institutions. 
Neither of the funding institutions can be held responsible for them.
The research is part of the Munich Quantum Valley, which is supported by the Bavarian state
government with funds from the Hightech Agenda Bayern Plus. JIC acknowledges funding from the German Federal Ministry of Education and Research (BMBF) through EQUAHUMO (Grant No. 13N16066) within the funding program quantum technologies—from basic research to market.

\bibliographystyle{unsrtnat}
\bibliography{references}

\widetext
\appendix
\section{Proof of Lemma \ref{le:BEerror}}
\label{sec:Improved_BE_proof}
We study the characteristic function of the system's energy (assuming $\langle H \rangle=0$)
\begin{equation}
    \phi(t')\equiv \expval{e^{\frac{it'H}{\sigma}}} = \expval{e^{iHt}}.
\end{equation}
The bound on the error follows from Esseen's inequality \cite{feller1991introduction},
\begin{equation}\label{eq:esseen}
    \pi \zeta_N \leq \frac{C}{T} + \int\limits_0^T\frac{\vert\phi(t')-e^{-\frac{(t')^2}{2}}\vert}{t'}\text{d}t',
\end{equation}
where $ C \leq \frac{18}{\sqrt{2\pi}}$.
To estimate this, we just need the following Lemma bounding how close the characteristic function is to a Gaussian for small $t'$.
\begin{lemma}
Let $\ket{p}$ be a product state such that $\expval{H} \equiv \bra{p}H\ket{p}= 0$ and $\expval{H^2} = \sigma^2$. Moreover, let $t'\leq \frac{t^*\sigma}{2}$ with $t^*=\mathcal{O}(1)$. Then,
\begin{equation}
    \Big|\log\phi(t')-\left(\frac{-(t')^2}{2}\right)\Big|\leq 2\frac{N}{\sigma^3}\left(\frac{t'}{t^*}\right)^3.
\end{equation}
\end{lemma}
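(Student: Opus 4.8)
The plan is to recognize $\log\phi$ as the cumulant generating function of the energy in the product state $\ket{p}$ and to reduce the claim to a bound on its Taylor tail. Writing $\psi(t):=\log\expval{e^{itH}}=\sum_{k\ge 1}\frac{\kappa_k}{k!}(it)^k$, where $\kappa_k$ denotes the $k$-th energy cumulant, I would use that $\log\phi(t')=\psi(t'/\sigma)$. The hypotheses $\expval{H}=0$ and $\expval{H^2}=\sigma^2$ fix $\kappa_1=0$ and $\kappa_2=\sigma^2$, so the $k=2$ term of the series equals exactly $\frac{\sigma^2}{2}(it'/\sigma)^2=-\frac{(t')^2}{2}$, which is precisely the Gaussian exponent we are comparing against. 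Subtracting it leaves
\begin{equation}
\log\phi(t')-\left(-\frac{(t')^2}{2}\right)=\sum_{k\ge 3}\frac{\kappa_k}{k!}\left(\frac{it'}{\sigma}\right)^k,
\end{equation}
so that, taking absolute values termwise, the entire statement reduces to controlling $\sum_{k\ge 3}\frac{|\kappa_k|}{k!}(t'/\sigma)^k$.

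The essential input is the cluster expansion of \cite{Wild_2022}. I would invoke it to assert that $\psi(t)$ is analytic on a disk $|t|\le t^*$ whose radius $t^*=O(1)$ is set only by the locality of $H$ and is independent of $N$ (on this disk $\phi$ does not vanish, so the branch of the logarithm with $\psi(0)=0$ is well defined). Crucially, because cumulants of a sum of $\mathcal{O}(1)$-local terms are extensive, the same expansion yields Taylor coefficients that grow at most linearly in $N$ with the factorial growth tamed by the fixed convergence radius, i.e.
\begin{equation}
\frac{|\kappa_k|}{k!}\le N\,(t^*)^{-k}\qquad (k\ge 3).
\end{equation}

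Given this, the conclusion follows by summing a geometric series. Setting $r:=\frac{t'}{t^*\sigma}$, each term obeys $\frac{|\kappa_k|}{k!}(t'/\sigma)^k\le N\,r^k$, and the hypothesis $t'\le \frac{t^*\sigma}{2}$ gives $r\le\frac12$, so that
\begin{equation}
\sum_{k\ge 3}\frac{|\kappa_k|}{k!}\left(\frac{t'}{\sigma}\right)^k\le N\sum_{k\ge 3}r^k=N\,\frac{r^3}{1-r}\le 2N r^3=\frac{2N}{\sigma^3}\left(\frac{t'}{t^*}\right)^3,
\end{equation}
which is exactly the asserted bound. The elementary parts here are the identification of the first two cumulants and the geometric summation; the main obstacle is the middle step, namely extracting from the cluster expansion the precise cumulant estimate $\frac{|\kappa_k|}{k!}\le N(t^*)^{-k}$, with both the extensive $N$-scaling and an $N$-independent radius $t^*$. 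That combination is what makes the tail summable with the correct $N/\sigma^3$ prefactor, and verifying that the convergence radius supplied by \cite{Wild_2022} matches the $t^*$ appearing in the statement is the point requiring the most care.
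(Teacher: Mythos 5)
Your proposal is correct and follows essentially the same route as the paper: the paper's proof is simply an invocation of Theorem 10 of \cite{Wild_2022}, which supplies exactly the cumulant/Taylor-coefficient bound $\tfrac{|\kappa_k|}{k!}\le N (t^*)^{-k}$ with $N$-independent radius $t^*$ that you identify as the crux, after which the stated inequality is the geometric tail sum $N r^3/(1-r)\le 2Nr^3$ for $r=t'/(t^*\sigma)\le\tfrac12$. Your write-up just makes explicit the summation step the paper leaves implicit.
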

This follows straightforwardly from Theorem 10 in \cite{Wild_2022}, where $t^*$ is defined as an $\mathcal{O}(1)$ number that depends on the connectivity of the Hamiltonian. 
As a consequence, we have that, for $t'\leq \frac{t^*\sigma}{2}$, \ksr{
\begin{align}
    -2\frac{N}{\sigma^3}\left(\frac{t'}{t^*}\right)^3
    &\leq
    \log\phi(t')+\left(\frac{(t')^2}{2}\right)
    \leq 
    2\frac{N}{\sigma^3}\left(\frac{t'}{t^*}\right)^3,
\end{align}    
Subtracting the term $\frac{(t')^2}{2}$ from the whole equation, and then exponentiating, we get
\begin{align}
    e^{-\frac{(t')^2}{2}}\left(1-\mathcal{O}\left(\frac{N}{\sigma^3}\left(\frac{t'}{t^*}\right)^3\right)\right)
    \leq
    \phi(t')
    \leq
    e^{-\frac{(t')^2}{2}}\left(1+\mathcal{O}\left(\frac{N}{\sigma^3}\left(\frac{t'}{t^*}\right)^3\right)\right)
\end{align}
Subtracting $e^{-\frac{(t')^2}{2}}$ from both sides,
\begin{equation}
    \vert\phi(t') - e^{-\frac{t'^2}{2}}\vert 
    \leq
    e^{\frac{-(t')^2}{2}}\mathcal{O}
    \left(\frac{N}{\sigma^3}\left(\frac{t'}{t^*}\right)^3\right).
\end{equation}
Plugging this bound in Eq. \eqref{eq:esseen} and integrating yields,
\begin{align}
    \pi\zeta_N
    \leq
    \frac{C}{T}
    +
    \int\limits_0^T{
    e^{\frac{-(t')^2}{2}}\mathcal{O}
    \left(\frac{N}{\sigma^3}\frac{(t')^2}{(t^*)^3}\right)dt'}
\end{align}
Integrating and choosing $T = \frac{\sigma t^*}{2}$ and $t^*=\mathcal{O}(1)$,
\begin{equation}
    \pi \zeta_N \leq \frac{2C}{ \sigma t^*}+\frac{3N}{(\sigma t^*)^3}
    =
    \mathcal{O}\left(\frac{1}{\sqrt{N}}\right),
\end{equation}
using Assumption~\ref{ass1}, the result follows.}


\section{Proof of Lemma \ref{le:Variance_bound_general}}\label{app:proofMain}
To have a truncated number of terms in the final bond dimension calculation we use the approximate cosine filter, $g_y\left(\frac{H-E}{N}\right) = \frac{1}{2^{M}}\sum\limits_{m=-y\sqrt{M}}^{y\sqrt{M}}\binom{M}{\frac{M}{2}-m}e^{i2m(\frac{H-E}{N})}$, where $E$ is the average energy around which we filter. 
Applying this operator to the initial state $\ket{p}$, the \ksr{normalized} filtered state $\ket{p_{M,y}}$ can be written as
\begin{equation}
    \ket{p_{M,y}}= \frac{g_y\left(\frac{H-E}{N}\right) \ket{p}}{\norm{g_y\left(\frac{H-E}{N}\right) \ket{p}}}.
\end{equation}
Let the variance of the initial state $\ket{p}$ be $\sigma = \mathcal{O}(\sqrt{N})$. Given the Hamiltonian operator $H$, the new variance is defined as
\begin{equation}
    \delta^2 = \bra{p_{M,y}}H^2\ket{p_{M,y}} - \bra{p_{M,y}}H\ket{p_{M,y}}^2,
\end{equation} 
where $\mu = \bra{p_{M,y}}H\ket{p_{M,y}}$ is the average of the filtered state. 

The average is not expected to change significantly from $E$ on application of the filter operator since both the filter and the initial state are taken to be centered around the energy $E$ and the filter is symmetric about the average. We first prove this by showing that $\vert \mu - E \vert = \mathcal{O}\left(\frac{N}{\sqrt{M}}\right)$, and then upper bound the variance in a similar manner. 
Without loss of generality, we choose $E = 0$. We begin by writing the initial state $\ket{p}$ in the energy eigenbasis of Hamiltonian $H$,
\begin{equation}
    \ket{p} = \sum\limits_{E_j}b_j\ket{E_j},
\end{equation}
where $E_j$'s represent the eigenvalues.
\ksr{We now calculate the average $\mu$ by substituting $\ket{p}$ from the above equation}
\begin{equation}\label{eq:av_step1}
    \mu = \bra{p_{M,y}}H\ket{p_{M,y}} 
    =\frac{\bra{p}g_y^2\left(\frac{H}{N}\right)H\ket{p}}{\norm{g_y\left(\frac{H-E}{N}\right) \ket{p}}^2}
    =\frac{\sum\limits_{E_j}\vert b_j\vert^2g_y^2\left(\frac{E_j}{N}\right)E_j}
    {\sum\limits_{E_j}\vert b_j\vert^2g_y^2\left(\frac{E_j}{N}\right)}.
\end{equation}
\ksr{Recall that the approximate cosine filter was introduced (Lemma~\ref{def:grouped_cosine}) to decrease the extensive number of terms in the time evolution representation of the original cosine filter.}
The approximate cosine filter is related to the original filter in terms of the following inequalities
\begin{equation}\label{eq:ineq_grouped_cosine}
    \cos^{2M}\left(\frac{H}{N}\right) - 4e^{-\frac{y^2}{2}} \leq g_y^2\left(\frac{H}{N}\right)\leq \cos^{2M}\left(\frac{H}{N}\right) + 4e^{-y^2}.
\end{equation}
\ksr{In the upcoming sections, we will use the setting introduced above to bound the average and variance of the final filtered state.}
\subsection{Average energy}

The distance from the mean $\mu$ can be bounded as follows
\ksr{by substituting equation~\eqref{eq:ineq_grouped_cosine} into equation~\eqref{eq:av_step1},}
\begin{align}\label{eq:main_mean_eqn1}
\begin{split}
   \vert \mu  \vert
    \leq 
   \left \vert \frac{\sum\limits_{E_{j}}{\vert b_{j}\vert^2\cos^{2M}{\left(\frac{E_j}{N}\right)}E_j}+4e^{-y^2}\sum\limits_{E_j}\vert b_j\vert^2\vert E_j\vert}{\sum\limits_{E_{j}}{\vert b_{j}\vert^2\cos^{2M}{\left(\frac{E_j}{N}\right)}}-4e^{-\frac{y^2}{2}}\sum\limits_{E_j}{|b_j|^2}} 
   \right \vert
    &\leq  
    \left \vert
    \frac{\sum\limits_{E_{j}}{|b_{j}|^2\cos^{2M}{\left(\frac{E_j}{N}\right)}E_j}+4e^{-y^2}\sum\limits_{E_j}\vert b_j\vert^2\vert E_j\vert}{\sum\limits_{E_{j}}{|b_{j}|^2\cos^{2M}{\left(\frac{E_j}{N}\right)}}-4e^{-\frac{y^2}{2}}} \right \vert
    \\
    &:= \left \vert \frac{\mathcal{M}_{1(+)}+\mathcal{M}_{1(-)}+\mathcal{E}_1}{\mathcal{M}_2} \right \vert,
\end{split}
\end{align}
where the second inequality uses that the initial average energy is $0$ and $\sum\limits_{E_j}|b_j|^2 = 1$, and $\mathcal{M}_{1(+)}$, $\mathcal{M}_{1(-)}$, $\mathcal{E}_1$, and $\mathcal{M}_2$ are defined as follows
\begin{align}
    \mathcal{M}_{1(+)} &= \sum\limits_{E_{j}>0}{|b_{j}|^2\cos^{2M}{\left(\frac{E_j}{N}\right)}E_j},
    \\
    \mathcal{M}_{1(-)} &= \sum\limits_{E_{j}<0}{|b_{j}|^2\cos^{2M}{\left(\frac{E_j}{N}\right)}E_j},
    \\
    \mathcal{E}_1 &= 
    4e^{-y^2}\sum\limits_{E_j}\vert b_j\vert^2\vert E_j\vert,
    \\
    \mathcal{M}_2 &= \sum\limits_{E_{j}}{|b_{j}|^2\cos^{2M}{\left(\frac{E_j}{N}\right)}}-4e^{-\frac{y^2}{2}}.
\end{align}
First we focus on simplifying the numerator of Eq. \eqref{eq:main_mean_eqn1}. 
We use what we term the \textit{segmentation method} (motivated from a similar procedure used in \cite{Hovhannisyan_2021}). \alv{As such, the initial steps of the proof mirror those of Appendix E in \cite{Hovhannisyan_2021}, although here the object we are estimating yields a variety of additional technical complications, due to the presence of the filter, and require more careful error estimates.} 
The start of the method is to divide the energy range $E_{\text{max}}\sim \mathcal{O}(N)$ into $R_1$ number of segments of width $\Lambda_1> 0$ each, so that 
\begin{equation}\label{eq:segmentation_method}
    \mathcal{M}_{1(+)}
    =
    \sum\limits_{E_{j}>0}{|b_j|^2\cos^{2M}{\left(\frac{E_j}{N}\right)}E_j} 
    \leq \sum\limits_{l=0}^{R_1-1}{\cos^{2M}{\left(\frac{\Lambda_1l}{N}\right)}(\Lambda_1(l+1))}\sum\limits_{\Lambda_1 l<E_j<\Lambda_1(l+1)}{|b_j|^2}.
\end{equation}
Using Definition \ref{def:Improved_BE}, the coefficients $b_j$ can be approximated using a Gaussian integral
\begin{equation}
    \sum\limits_{\Lambda_1 l<E_j<\Lambda_1(l+1)}{|b_j|^2}
    \leq
    \int\limits_{\Lambda_1l}^{\Lambda_1(l+1)}{\frac{dt}{\sigma\sqrt{2\pi}}e^{-\frac{t^2}{2\sigma^{2}}}}+\zeta_N.
\end{equation}
Since by definition $\mathcal{M}_{1(-)}\le 0$, the terms in the numerator can be bounded as
\begin{equation}
     \mathcal{M}_{1(+)}
     \leq 
     \sum\limits_{l=0}^{R_1-1}{\cos^{2M}{\left(\frac{\Lambda_1l}{N}\right)}(\Lambda_1(l+1))\left[\int\limits_{\Lambda_1l}^{\Lambda_1(l+1)}{\frac{dt}{\sigma\sqrt{2\pi}}e^{-\frac{t^2}{2\sigma^{2}}}}+\zeta_N\right]},
\end{equation}
\begin{equation}
     -\mathcal{M}_{1(-)}
     \leq 
     \sum\limits_{l=0}^{R_1-1}{\cos^{2M}{\left(\frac{\Lambda_1l}{N}\right)}(\Lambda_1(l+1))\left[\int\limits_{-\Lambda_1(l+1)}^{-\Lambda_1l}{\frac{dt}{\sigma\sqrt{2\pi}}e^{-\frac{t^2}{2\sigma^{2}}}}+\zeta_N\right]},
\end{equation}
Let $\tilde{t} = \frac{t}{\sqrt{2}\sigma},  d\tilde{t} = \frac{dt}{\sqrt{2}\sigma}$, and $\tilde{\Lambda}_1 = \frac{\Lambda_1}{\sqrt{2}\sigma}$, then $\mathcal{M}_{1(+)}$ can be bounded as
\begin{align}
\begin{split}
    \mathcal{M}_{1(+)}
   \leq 
   \frac{\sqrt{2}\sigma}{\sqrt{\pi}}\sum\limits_{l=0}^{R_1-1}&{\int\limits_{\tilde{\Lambda}_1 l}^{\tilde{\Lambda}_1(l+1)}{d\tilde{t}\cos^{2M}{\left(\sqrt{2}\sigma\left(\frac{\tilde{t}-\tilde{\Lambda}_1}{N}\right)\right)}(\tilde{t}+\tilde{\Lambda}_1) e^{-\tilde{t}^{2}}}}+
   \\&+
   \sqrt{2}\sigma\zeta_N\sum\limits_{l=0}^{R_1-1}{\cos^{2M}{\left(\frac{\sqrt{2}\sigma\tilde{\Lambda}_1l}{N}\right)}\tilde{\Lambda}_1(l+1)}
\end{split}
\end{align}
\begin{align}
\begin{split}
   \Rightarrow\mathcal{M}_{1(+)}
   \leq 
   \frac{\sqrt{2}\sigma}{\sqrt{\pi}}&\int\limits_0^{\tilde{\Lambda}_1 R_1}{d\tilde{t}\cos^{2M}{\left(\sqrt{2}\sigma\left(\frac{\tilde{t}-\tilde{\Lambda}_1}{N}\right)\right)}(\tilde{t}+\tilde{\Lambda}_1) e^{-\tilde{t}^{2}}}+
    \\&+
    \sqrt{2}\sigma\zeta_N\sum\limits_{l=0}^{R_1-1}{\cos^{2M}{\left(\frac{\sqrt{2}\sigma\tilde{\Lambda}_1 l}{N}\right)}(\tilde{\Lambda}_1(l+1))}.
\end{split}
\end{align}
Similarly
\begin{align}
\begin{split}
    - \mathcal{M}_{1(-)}
   \leq 
    -\frac{\sqrt{2}\sigma}{\sqrt{\pi}}&\int\limits_{-\tilde{\Lambda}_1 R_1}^0{d\tilde{t}\cos^{2M}{\left(\sqrt{2}\sigma\left(\frac{\tilde{t}+\tilde{\Lambda}_1}{N}\right)\right)}(\tilde{t}-\tilde{\Lambda}_1) e^{-\tilde{t}^{2}}}+
    \\&+\sqrt{2}\sigma\zeta_N\sum\limits_{l=0}^{R_1-1}{\cos^{2M}{\left(\frac{\sqrt{2}\sigma\tilde{\Lambda}_1 l}{N}\right)}(\tilde{\Lambda}_1(l+1))}.
\end{split}
\end{align}

We get the following upper bound on the $\vert  \mathcal{M}_{1(+)} + \mathcal{M}_{1(-)} 
 \vert$, 
\begin{align}\label{eq:main_numerator_avg_energy}
\begin{split}
    \vert \mathcal{M}_{1(+)} + \mathcal{M}_{1(-)}  \vert
   \leq 
    \frac{2\sqrt{2}\sigma}{\sqrt{\pi}}\int\limits_0^{\tilde{\Lambda}_1 R_1}d\tilde{t}e^{-\tilde{t}^2}\left[
    \cos^{2M}{\left(\sqrt{2}\sigma\left(\frac{\tilde{t}-\tilde{\Lambda}_1}{N}\right)\right)}\tilde{\Lambda}_1
    \right]+
    \\+
    2\sqrt{2}\sigma\zeta_N\sum\limits_{l=0}^{R_1-1}{\cos^{2M}{\left(\frac{\sqrt{2}\sigma\tilde{\Lambda}_1 l}{N}\right)}(\tilde{\Lambda}_1(l+1))}.
\end{split}
\end{align}
Additionally, the error term $\mathcal{E}_1$ can be bounded similarly as
\begin{align}
    \mathcal{E}_1
    &\leq
    4e^{-y^2}\sum_{l=0}^{R_1-1}\Lambda_1(l+1)\sum_{\Lambda_1l\leq E_j\leq\Lambda_1(l+1)}\vert b_j\vert^2
    \\
    &\leq
    4e^{-y^2}\left[
    \frac{\sqrt{2}\sigma}{\sqrt{\pi}}
    \int_{0}^{\tilde{\Lambda}_1R_1}d\tilde{t}e^{-\tilde{t}^2}(\tilde{t}+\tilde{\Lambda}_1)
    +
    \sqrt{2}\sigma\zeta_N
    \sum_{l=0}^{R_1-1}\tilde{\Lambda}_1(l+1)
    \right]
    \\
    &\leq
    2\sqrt{2}\sigma e^{-y^2}\left[
    \frac{(1+\sqrt{\pi}\tilde{\Lambda}_1)}{\sqrt{\pi}}
    +
    \zeta_N\tilde{\Lambda}_1(R_1^2+3R_1)
    \right]
    \\
    &\leq
    2\sqrt{2}\sigma e^{-y^2}\left[
    \frac{(1+\sqrt{\pi}\tilde{\Lambda}_1)}{\sqrt{\pi}}
    +
    4\zeta_N\sqrt{N}R_1
    \right]
    .\label{eq:err_term_average_num}
\end{align}

Applying a similar method to lower bound the first term in the denominator of Eq. \eqref{eq:main_mean_eqn1},
\begin{align}
    \sum\limits_{E_j}{|b_j|^2\cos^{2M}{\left(\frac{E_j}{N}\right)}} 
    &\geq 
    2\sum\limits_{l=0}^{R_2-1}{\cos^{2M}{\left(\frac{\Lambda_2(l+1)}{N}\right)}}\sum\limits_{\Lambda_2 l<E_j<\Lambda_2 (l+1)}{|b_j|^2}
    \\
   &\geq 
   2\sum\limits_{l=0}^{R_2-1}{\cos^{2M}{\left(\frac{\Lambda_2(l+1)}{N}\right)}\left[\int\limits_{\Lambda_2 l}^{\Lambda_2 (l+1)}{\frac{dt}{\sigma\sqrt{2\pi}}e^{-\frac{t^2}{2\sigma^{2}}}}-\zeta_N\right]}
   \\
   &\geq 
   \frac{2}{\sqrt{\pi}}\sum\limits_{l=0}^{R_2-1}{\int\limits_{\tilde{\Lambda}_2 l}^{\tilde{\Lambda}_2 (l+1)}{d\tilde{t}\cos^{2M}{\left(\sqrt{2}\sigma\left(\frac{\tilde{t}+\tilde{\Lambda}_2}{N}\right)\right)}e^{-\tilde{t}^{2}}}}\nonumber
   \\&-
2\zeta_N\sum\limits_{l=0}^{R_2-1}{\cos^{2M}{\left(\frac{\sqrt{2}\sigma\tilde{\Lambda}_2(l+1)}{N}\right)}}.
\end{align}
We get the following lower bound on the denominator
\begin{align}\label{eq:AppendixDenominator}
   \mathcal{M}_2
   \geq 
   \frac{2}{\sqrt{\pi}}\int\limits_{0}^{\tilde{\Lambda}_2R_2}{d\tilde{t}\cos^{2M}{\left(\sqrt{2}\sigma\left(\frac{\tilde{t}+\tilde{\Lambda}_2}{N}\right)\right)}e^{-\tilde{t}^{2}}}
    -2\zeta_N\sum\limits_{l=0}^{R_2-1}{\cos^{2M}{\left(\frac{\sqrt{2}\sigma\tilde{\Lambda}_2(l+1)}{N}\right)}}-4e^{-\frac{y^2}{2}}.
\end{align}
Combining the bounds on the numerator \ksr{(Eq.~\eqref{eq:main_numerator_avg_energy} and Eq.~\eqref{eq:err_term_average_num})} and the denominator \ksr{(Eq.~\eqref{eq:AppendixDenominator})}
\begin{align}
   \vert \mu  \vert 
    \leq 
    \frac{\sqrt{2}\sigma\left[\frac{1}{\sqrt{\pi}}\int\limits_0^{\tilde{\Lambda}_1 R_1}d\tilde{t}e^{-\tilde{t}^2}\left[
    \cos^{2M}{\left(\sqrt{2}\sigma\left(\frac{\tilde{t}-\tilde{\Lambda}_1}{N}\right)\right)}\tilde{\Lambda}_1
    \right]
    +
    \mathcal{U}\right]
    }
    {\frac{1}{\sqrt{\pi}}\int\limits_{0}^{\tilde{\Lambda}_2R_2}{d\tilde{t}\cos^{2M}{\left(\sqrt{2}\sigma\left(\frac{\tilde{t}+\tilde{\Lambda}_2}{N}\right)\right)}e^{-\tilde{t}^{2}}}
   -\zeta_N\sum\limits_{l=0}^{R_2-1}{\cos^{2M}{\left(\frac{\sqrt{2}\sigma\tilde{\Lambda}_2(l+1)}{N}\right)}}
   -2e^{-\frac{y^2}{2}}},
\end{align}
where 
\begin{align}
    \mathcal{U}=\zeta_N\sum\limits_{l=0}^{R_1-1}{\cos^{2M}{\left(\frac{\sqrt{2}\sigma\tilde{\Lambda}_1 l}{N}\right)}(\tilde{\Lambda}_1(l+1))}
    +
    2 e^{-y^2}\left[
    \frac{(1+\sqrt{\pi}\tilde{\Lambda}_1)}{\sqrt{\pi}}
    +
    4\zeta_N\sqrt{N}R_1
    \right].
\end{align}
Note the following upper and lower bounds on $\cos^a(x)$ for $a>0$,
\begin{equation}
     e^{-ax^2}\leq \cos^a(x) \leq e^{-\frac{ax^2}{2}}.
    \label{eq:cosine-property2}
\end{equation}
We can use these to replace the cosine power function in the numerator and denominator to obtain 
\begin{equation}
     \vert \mu  \vert  \leq \frac{
     \begin{split}
    \frac{\sqrt{2}\sigma\tilde{\Lambda}_1}{\sqrt{\pi}}\int\limits_{0}^{\tilde{\Lambda}_1 R_1}{d\tilde{t} e^{-2M\sigma^2(\frac{\tilde{t}-\tilde{\Lambda}_1}{N})^{2}} e^{-\tilde{t}^{2}}}+\sqrt{2}\sigma&\zeta_N\sum\limits_{l=0}^{R_1-1}{e^{-2M\sigma^2\frac{\tilde{\Lambda}_1^2l^2}{N^2}}(\tilde{\Lambda}_1(l+1))}+
    \\&+2\sqrt{2}\sigma e^{-y^2}\left[
    \frac{(1+\sqrt{\pi}\tilde{\Lambda}_1)}{\sqrt{\pi}}
    +
    4\zeta_N\sqrt{N}R_1
    \right]
     \end{split}
     }
    {\frac{1}{\sqrt{\pi}}\int\limits_{0}^{\tilde{\Lambda}_2R_2}{d\tilde{t}e^{-4M\sigma^2(\frac{\tilde{t}+\tilde{\Lambda}_2}{N})^{2}}e^{-\tilde{t}^{2}}}-\zeta_N\sum\limits_{l=0}^{R_2-1}{e^{-4M\sigma^2(\frac{\tilde{\Lambda}_2(l+1)}{N})^{2}}}-4e^{-\frac{y^2}{2}}}.
\end{equation}
Now \ksr{using Assumption~\ref{ass1} by} substituting $\sigma = s\sqrt{N}$,
\begin{equation}\label{eq:main_mean_eqn2}
    \vert\mu\vert 
    \leq 
    \frac{
    \begin{split}
    \sqrt{2N}s\Bigg[
    \frac{\tilde{\Lambda}_1}{\sqrt{\pi}}\int\limits_{0}^{\tilde{\Lambda}_1 R_1}{d\tilde{t} e^{-\frac{2s^2M}{N}(\tilde{t}-\tilde{\Lambda}_1)^{2}} e^{-\tilde{t}^{2}}}
    +
    \zeta_N&\sum\limits_{l=0}^{R_1-1}{e^{-\frac{2s^2M}{N}\tilde{\Lambda}_1^2l^2}(\tilde{\Lambda}_1(l+1))}+
    \\&+2 e^{-y^2}\left[
    \frac{(1+\sqrt{\pi}\tilde{\Lambda}_1)}{\sqrt{\pi}}
    +
    4\zeta_N\sqrt{N}R_1
    \right]\Bigg]
    \end{split}
    }
    {\frac{1}{\sqrt{\pi}}\int\limits_{0}^{\tilde{\Lambda}_2R_2}{d\tilde{t}e^{-\frac{4s^2M}{N}(\tilde{t}+\tilde{\Lambda}_2)^{2}}e^{-\tilde{t}^{2}}}-\zeta_N\sum\limits_{l=0}^{R_2-1}{e^{-\frac{4s^2M}{N}(\tilde{\Lambda}_2(l+1))^{2}}}-4e^{-\frac{y^2}{2}}} 
    \\:= \frac{\mathcal{M}'_1}{\mathcal{M}'_2}.
\end{equation}
The first term of $\mathcal{M}'_1$ can be bounded as
\begin{align}\label{eq:numdem}
\begin{split}
    \frac{\sqrt{2N}s\tilde{\Lambda}_1}{\sqrt{\pi}}\int\limits_{0}^{\tilde{\Lambda}_1 R_1} &{d\tilde{t} e^{-\frac{2s^2M}{N}(\tilde{t}-\tilde{\Lambda}_1)^{2}} e^{-\tilde{t}^{2}}} \\
    &\leq
    \frac{\sqrt{N}s\tilde{\Lambda}_1e^{-\frac{\tilde{\Lambda}_1^2}{(1+\frac{N}{2s^2M})}}}{\sqrt{2}\sqrt{1+\frac{2s^2M}{N}}}
    \Bigg[
    \erf\left(\tilde{\Lambda}_1R_1\sqrt{1+\frac{2s^2M}{N}}-\frac{2s^2M\tilde{\Lambda}_1}{N}\frac{1}{\sqrt{1+\frac{2s^2M}{N}}}\right)
    \\&+
    \erf\left(\frac{2s^2M\tilde{\Lambda}_1}{N}\frac{1}{\sqrt{1+\frac{2s^2M}{N}}}\right)
    \bigg]
    \\
    &\leq 
    \frac{\sqrt{2N}s\tilde{\Lambda}_1e^{-\frac{\tilde{\Lambda}_1^2}{1+\frac{N}{2s^2M}}}}{\sqrt{1+\frac{2s^2M}{N}}}.
\end{split}
\end{align}
In the second inequality, we have used that the error function is bounded by 1. To bound the discrete sum error terms, we use the\textit{ Euler-Maclaurin formula} to change summations to integrals. It is given as follows
\begin{align}\label{eq:euler_maclaurin}
\sum_{i=a}^bf(i)-\int_a^bf(x)dx
=
\frac{f(a)+f(b)}{2}
+
\sum_{k=1}^{\lfloor\frac{p}{2}\rfloor}\frac{B_{2k}}{(2k)!}\left(f^{(2k-1)}(b)-f^{(2k-1)}(a)\right)+\mathcal{R}_p,
\end{align}
where $B_n$ denotes the $n^{\text{th}}$ Bernoulli number, and the remainder term $\mathcal{R}_p$ is bounded as
\begin{equation}
    \vert\mathcal{R}_p\vert
    \leq
    \frac{2\zeta(p)}{(2\pi)^p}\int_a^b\vert f^{(p)}(x)\vert dx.
\end{equation}
The choice of $p$ is up to us. Taking $p=2$, the error term of $\mathcal{M}'_1$ can be bounded as
\begin{align}\label{eq:mean_num_error_term}
\begin{split}
   \sum\limits_{l=0}^{R_1-1}{e^{-\frac{2s^2M}{N}\tilde{\Lambda}_1^2l^2}(\tilde{\Lambda}_1(l+1))} 
    &\leq 
   \int\limits_0^{R_1} dl e^{-\frac{2s^2M}{N}\tilde{\Lambda}_1^2l^2}(\tilde{\Lambda}_1(l+1))
    +
    \frac{\tilde{\Lambda}_1+\tilde{\Lambda}_1R_1 e^{-\frac{2s^2M}{N}\tilde{\Lambda}_1^2(R_1-1)^2}}{2}
    +
    \\
    &+
    \frac{B_2}{2}\tilde{\Lambda}_1\left[e^{-\frac{2s^2M}{N}\tilde{\Lambda}_1^2(R_1-1)^2}\left(1-\frac{4s^2M}{N}\tilde{\Lambda}_1^2R_1(R_1-1)\right)-1\right]
    +
    \mathcal{R}_2.
\end{split}
\end{align}
The integral on the RHS can be bounded as follows
\begin{align}
    \int\limits_0^{R_1} dl e^{-\frac{2s^2M}{N}\tilde{\Lambda}_1^2l^2}(\tilde{\Lambda}_1(l+1))
    &\leq
    \frac{N}{4\tilde{\Lambda}_1Ms^2}(1-e^{-\frac{2s^2M}{N}\tilde{\Lambda}_1^2R_1^2})
    +
    \frac{\sqrt{\pi}}{2\sqrt{2}s}\sqrt{\frac{N}{M}}\erf\left(2\tilde{\Lambda}_1R_1s\sqrt{\frac{M}{N}}\right),
    \\
    &\leq
    \frac{N}{4\tilde{\Lambda}_1Ms^2}
    +
    \frac{\sqrt{\pi}}{2\sqrt{2}s}\sqrt{\frac{N}{M}}.
\end{align}
The the sum in Eq. \eqref{eq:mean_num_error_term} can be bounded as
\begin{align}
\begin{split}
   \sum\limits_{l=0}^{R_1-1}{e^{-\frac{2s^2M}{N}\tilde{\Lambda}_1^2l^2}(\tilde{\Lambda}_1(l+1))} 
   \leq 
   \frac{N}{4\tilde{\Lambda}_1Ms^2}
   & +
    \frac{\sqrt{\pi}}{2\sqrt{2}s}\sqrt{\frac{N}{M}}
    +
    \frac{\tilde{\Lambda}_1+\tilde{\Lambda}_1R_1 e^{-\frac{2s^2M}{N}\tilde{\Lambda}_1^2(R_1-1)^2}}{2}+
   \\ & +
    \frac{\tilde{\Lambda}_1e^{-\frac{2s^2M}{N}\tilde{\Lambda}_1^2(R_1-1)^2}}{12}
    +
    \mathcal{R}_2,
\end{split}
\end{align}
where we substitute $B_2 = \frac{1}{6}$ for the second Bernoulli number and $\mathcal{R}_2$ is the remainder term from the approximation, bounded as follows
\begin{equation}
    \mathcal{R}_2 
    \leq
    \frac{\tilde{\Lambda}_1\zeta(2)}{2\pi^2}\left(5+\sqrt{\frac{8\pi s^2\tilde{\Lambda}_1^2M}{N}}\right).
\end{equation}
This means that
\begin{align}\label{eq:err_term_avg_num_simplified}
\begin{split}
    \sum\limits_{l=0}^{R_1-1}{e^{-\frac{2s^2M}{N}\tilde{\Lambda}_1^2l^2}(\tilde{\Lambda}_1(l+1))} 
    \leq 
    \frac{N}{4\tilde{\Lambda}_1Ms^2}
    +
    \frac{\sqrt{\pi}}{2\sqrt{2}s}\sqrt{\frac{N}{M}}
    +
    \frac{\tilde{\Lambda}_1+\tilde{\Lambda}_1(R_1+\frac{1}{6}) e^{-\frac{2s^2M}{N}\tilde{\Lambda}_1^2(R_1-1)^2}}{2}
    +
    \\
    +
    \frac{\tilde{\Lambda}_1\zeta(2)}{2\pi^2}\left(5+\sqrt{\frac{8\pi s^2\tilde{\Lambda}_1^2M}{N}}\right).
\end{split}
\end{align}
\ksr{The value of the constant $C_1$ is arbitrary. To achieve the best possible upper bound on the variance, we find that the best choice is} $C_1 := \frac{2s^2M}{N}\tilde{\Lambda}_1^2$. Then we can bound $\mathcal{M}'_1$ \ksr{by adding Eq.~\eqref{eq:numdem} and Eq.~\eqref{eq:err_term_avg_num_simplified}} as
\begin{align}
\begin{split}
    \mathcal{M}'_1
    \leq 
    \frac{\sqrt{2N}s\tilde{\Lambda}_1e^{-\frac{\tilde{\Lambda}_1^2}{(1+\frac{N}{2s^2M})}}}{\sqrt{1+\frac{2s^2M}{N}}}
    +
    \sqrt{2N}s\zeta_N\Bigg[
    \frac{\tilde{\Lambda}_1}{2C_1}\left(1+\sqrt{\pi C_1}\right)
    +
    \frac{\tilde{\Lambda}_1+\tilde{\Lambda}_1(R_1+\frac{1}{6}) e^{-C_1(R_1-1)^2}}{2}+
    \\+
    \frac{\tilde{\Lambda}_1\zeta(2)}{2\pi^2}\left(5+2\sqrt{\pi C_1}\right)
    \Bigg]
    \\+2\sqrt{2N}se^{-y^2}\zeta_N\sqrt{N}R_1+2s\sqrt{N}e^{-y^2}\frac{(1+\sqrt{\pi}\tilde{\Lambda}_1)}{\sqrt{\pi}}.
\end{split}
\end{align}
Using $\tilde{\Lambda}_1 = \frac{\Lambda_1}{s\sqrt{2N}}$, we can write the bound as follow
\begin{align}
\begin{split}
    \mathcal{M}'_1
    \leq 
    \sqrt{\frac{N}{2s^2M}}\frac{\Lambda_1e^{-\frac{1}{2s^2}\frac{\Lambda_1^2}{N}\frac{1}{1+\frac{N}{2s^2M}}}}{\sqrt{1+\frac{N}{2s^2M}}}
    +
    \zeta_N\Bigg[
    \frac{\Lambda_1}{2C_1}\left(1+\sqrt{\pi C_1}\right)
    +
    \frac{\Lambda_1+\Lambda_1(R_1+\frac{1}{6}) e^{-C_1(R_1-1)^2}}{2}+
    \\+
    \frac{\Lambda_1\zeta(2)}{2\pi^2}\left(5+2\sqrt{\pi C_1}\right)
    \Bigg]+
    \\+2\sqrt{2N}se^{-y^2}\zeta_N\sqrt{N}R_1+2s\sqrt{N}e^{-y^2}\frac{(1+\sqrt{\pi}\tilde{\Lambda}_1)}{\sqrt{\pi}}.
\end{split}
\end{align}
Note that $e^{-\frac{1}{2s^2}\frac{\Lambda_1^2}{N}\frac{1}{1+\frac{N}{2s^2M}}} \leq 1$ and $(1+\frac{N}{2s^2M})^{-1/2}\leq 1$ for any value of the free parameters and this is particularly tight approximation for $\frac{N}{2s^2M}<1$. Thus, choosing $\Lambda_1$ such that $C_1 = \mathcal{O}(1)$, we get the following overall upper bounds on the numerator of the average
\begin{align}
    \mathcal{M}'_1
    \leq 
    \frac{1}{s}\sqrt{\frac{N}{2M}}
    \Lambda_1
    +2\sqrt{2}se^{-y^2}N\zeta_NR_1+2s\sqrt{N}e^{-y^2}\frac{(1+\sqrt{\pi}\tilde{\Lambda}_1)}{\sqrt{\pi}}
    +
    \mathcal{O}(\zeta_N\Lambda_1).
\end{align}
For the choice of $y$, we require the following condition, such that the 2nd term is smaller than the leading term
\begin{align}
    e^{-y^2}&\leq\frac{1}{4s^2}\sqrt{\frac{\Lambda_1^2}{2MN\zeta_N^2R_1^2}}.
\end{align}
First substituting $R = \frac{N}{\Lambda_1}$ and then $ \Lambda_1 = \sqrt{C_1}\frac{N}{\sqrt{M}}$, we get the bound
\begin{align}
    y
    =\Omega\left(\sqrt{\log\frac{M^3\zeta_N^2}{N}}\right).
\end{align}
We now use Lemma \eqref{le:BEerror} to find the worst case bound on $y$ which will be applicable for all values of $\zeta_N$,
\begin{align}\label{eq:y_bound}
    y=\Omega\left(\sqrt{\log\frac{M^3}{N^2}}\right).
\end{align}
\ksr{The parameter $y$ dictates how many terms in Eq.~\eqref{eq:gyx-def} are to be included in the later calculation of the bond dimension.}
Using that $C_1 = \mathcal{O}(1)$, we get the scalings $\Lambda_1 \propto \frac{N}{\sqrt{M}}$ and $R_1 \propto \sqrt{M}$. Considering these, we choose $ y=\Omega\left(\sqrt{\log\frac{M^3\zeta_N^2}{N}}\right)$, to get the following upper bound on the numerator
\begin{align}
    \mathcal{M}'_1
    \leq 
    \frac{1}{s}\sqrt{\frac{N}{2M}}
    \Lambda_1
    +
    \mathcal{O}(\zeta_N\Lambda_1).
\end{align}
We now simplify the denominator of Eq. \eqref{eq:main_mean_eqn2}.  Define the variable
\begin{equation}
    C_2:=\frac{4s^2M}{N}\tilde{\Lambda}_2^2 = \frac{2M\Lambda_2^2}{N^2},
\end{equation}
\ksr{Using this value of $C_2$ and the Euler-Maclaurin formula from the equation~\eqref{eq:euler_maclaurin},} the integral in the first term of denominator $\mathcal{M}'_2$ can be lower bounded as
\begin{align}
\begin{split}
    \frac{1}{\sqrt{\pi}}\int\limits_{0}^{\tilde{\Lambda}_2R_2}{d\tilde{t}e^{-\frac{4s^2M}{N}(\tilde{t}+\tilde{\Lambda}_2)^{2}}e^{-\tilde{t}^{2}}}
    &\geq
    \frac{1}{2}
    \frac{e^{-\frac{4s^2M}{N}\frac{\tilde{\Lambda}_2^2}{1+\frac{4s^2M}{N}}}}{\sqrt{1+\frac{4s^2M}{N}}}
    \left[
    \erf\left(
    \tilde{\Lambda}_2R_2\sqrt{1+\frac{4s^2M}{N}}
    +
    \frac{\tilde{\Lambda}_2}{\sqrt{\frac{N}{4s^2M}}\sqrt{1+\frac{N}{4s^2M}}}
    \right)
    \right.
    \\
    &\quad
    \left.
    \quad\quad\quad\quad\quad\quad\quad\quad -
    \erf\left(
    \frac{\tilde{\Lambda}_2}{\sqrt{\frac{N}{4s^2M}}\sqrt{1+\frac{N}{4s^2M}}}
    \right)
    \right]
\end{split}
\end{align}
Since we have the freedom of choosing $\tilde{\Lambda}_2$, we take $C_2 = \frac{3}{4}$ (or $\tilde{\Lambda}_2 = \frac{3N}{16s^2M}$). Then the integral is bounded as
\begin{align}
    \frac{1}{\sqrt{\pi}}\int\limits_{0}^{\tilde{\Lambda}_2R_2}{d\tilde{t}e^{-\frac{4s^2M}{N}(\tilde{t}+\tilde{\Lambda}_2)^{2}}e^{-\tilde{t}^{2}}}
    &\geq
    \frac{1}{2}
    \frac{\tilde{\Lambda}_2e^{-\frac{3\tilde{\Lambda}_2^2}{4\tilde{\Lambda}_2^2+3}}}{\sqrt{\tilde{\Lambda}_2^2+\frac{3}{4}}}
    \left[
    \erf\left(
    R_2\sqrt{\tilde{\Lambda}_2^2+\frac{3}{4}}
    +
    \frac{3}{4\sqrt{\tilde{\Lambda}_2^2+\frac{3}{4}}}
    \right)
    -
    \erf\left(
    \frac{3}{4\sqrt{\tilde{\Lambda}_2^2+\frac{3}{4}}}
    \right)
    \right]
\end{align}
\begin{align}\label{eq:mean_denominator_integral}
    \Rightarrow\frac{1}{\sqrt{\pi}}\int\limits_{0}^{\tilde{\Lambda}_2R_2}{d\tilde{t}e^{-\frac{4s^2M}{N}(\tilde{t}+\tilde{\Lambda}_2)^{2}}e^{-\tilde{t}^{2}}}
    \geq
    \frac{1}{2}
    \frac{\tilde{\Lambda}_2e^{-\frac{3\tilde{\Lambda}_2^2}{4\tilde{\Lambda}_2^2+3}}}{\sqrt{\tilde{\Lambda}_2^2+\frac{3}{4}}}
    \left[
    \erf\left(R_2\sqrt{\frac{3}{4}}\right)
    -
    \erf\left(
    \frac{3}{4\sqrt{\tilde{\Lambda}_2^2+\frac{3}{4}}}
    \right)
    \right].
\end{align}

Considering the power series approximation of $\erf(x)$ for $x = \Omega(1)$ we get the following lower bound
\begin{equation}
\erf(x) \geq 1-\frac{1}{\sqrt{\pi}}\frac{e^{-x^2}}{x}.
\end{equation}
Substituting this bound in the first term of Eq. \eqref{eq:mean_denominator_integral},
\begin{align}
    \frac{1}{\sqrt{\pi}}\int\limits_{0}^{\tilde{\Lambda}_2R_2}{d\tilde{t}e^{-\frac{4s^2M}{N}(\tilde{t}+\tilde{\Lambda}_2)^{2}}e^{-\tilde{t}^{2}}}
    &\geq
    \frac{1}{2}
    \frac{\tilde{\Lambda}_2e^{-\frac{3\tilde{\Lambda}_2^2}{4\tilde{\Lambda}_2^2+3}}}{\sqrt{\tilde{\Lambda}_2^2+\frac{3}{4}}}
    \left[
    1-\frac{2}{\sqrt{3\pi}}\frac{e^{-\frac{3}{4}R_2^2}}{R_2}
    -
    \erf\left(
    \frac{3}{4\sqrt{\tilde{\Lambda}_2^2+\frac{3}{4}}}
    \right)
    \right].
\end{align}

The second term of $\mathcal{M}'_2$ in Eq. \eqref{eq:numdem} can be bounded as
\begin{align}
\begin{split}
    \zeta_N\sum\limits_{l=0}^{R_2-1}{e^{-\frac{4s^2M}{N}(\tilde{\Lambda}_2(l+1))^{2}}}
    \leq
    \zeta_N\int\limits_0^{R_2-1} dl e^{-\frac{4s^2M}{N}(\tilde{\Lambda}_2(l+1))^{2}}
    +
    \zeta_N\frac{e^{-\frac{4s^2M}{N}\tilde{\Lambda}_2^{2}}+e^{-\frac{4s^2M}{N}\tilde{\Lambda}_2^{2}R_2^2}}{2}
    +
    \\
    +
    \zeta_N\frac{2s^2M}{3N}\tilde{\Lambda}_2^2
    \left(
    e^{-\frac{4s^2M}{N}\tilde{\Lambda}_2^2}
    -
    R_2e^{-\frac{4s^2M}{N}\tilde{\Lambda}_2^2R_2^2}
    \right)
    +
    \\
    +
    4\sqrt{\pi}\zeta_Ns\sqrt{\frac{M}{N}}\tilde{\Lambda}_2
    \left[
    \erf\left(\sqrt{\frac{4s^2M}{N}}\tilde{\Lambda}_2R_2\right)
    -
    \erf\left(\sqrt{\frac{4s^2M}{N}}\tilde{\Lambda}_2\right)
    \right]
    +
    \\
    +
    \frac{8s^2M}{N}\zeta_N\tilde{\Lambda}_2^2
    \left[
    e^{-\frac{4s^2M}{N}\tilde{\Lambda}_2^2}
    -R_2e^{-\frac{4s^2M}{N}\tilde{\Lambda}_2^2R_2^2}
    \right].
\end{split}
\end{align}
\begin{align}
\begin{split}
    \zeta_N\sum\limits_{l=0}^{R_2-1}{e^{-\frac{4s^2M}{N}(\tilde{\Lambda}_2(l+1))^{2}}}
    &\leq 
    \zeta_N\sqrt{\pi}\sqrt{\frac{N}{16s^2M}}\frac{1}{\tilde{\Lambda}_2}
    \left[
    \erf\left(\sqrt{\frac{4s^2M}{N}}\tilde{\Lambda}_2R_2\right)
    -
    \erf\left(\sqrt{\frac{4s^2M}{N}}\tilde{\Lambda}_2\right)
    \right]
    +
    \\
    &+\zeta_Ne^{-\frac{4s^2M}{N}\tilde{\Lambda}_2^{2}}
    +
    \zeta_N\frac{2s^2M}{3N}\tilde{\Lambda}_2^2
    e^{-\frac{4s^2M}{N}\tilde{\Lambda}_2^2}
    \\&+
    \sqrt{\pi}\zeta_N\sqrt{\frac{16s^2M}{N}}\tilde{\Lambda}_2
    \erf\left(\sqrt{\frac{4s^2M}{N}}\tilde{\Lambda}_2R_2\right)
    +
    \frac{8s^2M}{N}\zeta_N\tilde{\Lambda}_2^2
    e^{-\frac{4s^2M}{N}\tilde{\Lambda}_2^2}.
\end{split}
\end{align}
In terms of $C_2$, the error term is
\begin{align}
\begin{split}
    \zeta_N\sum\limits_{l=0}^{R_2-1}{e^{-\frac{4s^2M}{N}(\tilde{\Lambda}_2(l+1))^{2}}}
    \leq 
    \zeta_N\sqrt{\pi}\frac{1}{2\sqrt{C_2}}
    \left[
    \erf\left(\sqrt{C_2}R_2\right)
    -
    \erf(\sqrt{C_2})
    \right]
    +
    \zeta_Ne^{-C_2}
    +
    \zeta_N\frac{C_2}{6}
    e^{-C_2}
    +
    \\
    +
    2\sqrt{\pi}\zeta_N\sqrt{C_2}
    \erf(\sqrt{C_2}R_2)
    +
    2C_2\zeta_N
    e^{-C_2}.
\end{split}
\end{align}
Substituting \ksr{with the choice} $C_2=\frac{3}{4}$,
\begin{align}
    \zeta_N\sum\limits_{l=0}^{R_2-1}{e^{-\frac{4s^2M}{N}(\tilde{\Lambda}_2(l+1))^{2}}}
    \leq 
    3.7\zeta_N
    \erf(0.86R_2)
    +
    1.3\zeta_N.
\end{align}
Combining everything, the lower bound on $\mathcal{M}'_2$ can be written as
\begin{align}
\begin{split}
    \mathcal{M}'_2
    &\geq
    \frac{1}{2}
    \frac{\tilde{\Lambda}_2e^{-\frac{3\tilde{\Lambda}_2^2}{4\tilde{\Lambda}_2^2+3}}}{\sqrt{\tilde{\Lambda}_2^2+\frac{3}{4}}}
    \left[
    1-\frac{2}{\sqrt{3\pi}}\frac{e^{-\frac{3}{4}R_2^2}}{R_2}
    -
    \erf\left(
    \frac{3}{4\sqrt{\tilde{\Lambda}_2^2+\frac{3}{4}}}
    \right)
    \right]
    -
    3.7\zeta_N
    \erf(0.86R_2)
    -
    1.3\zeta_N
    -
    4e^{-\frac{y^2}{2}}.
\end{split}
\end{align}
Replacing $\frac{\tilde{\Lambda}_2^2}{\tilde{\Lambda}_2^2+\frac{3}{4}} = \frac{\frac{N}{4s^2M}}{\frac{N}{4s^2M}+1}$,
\begin{align}
    \mathcal{M}'_2
    &\geq
    \frac{1}{2}\sqrt{\frac{\frac{N}{4s^2M}}{\frac{N}{4s^2M}+1}}
    e^{-\frac{3}{4}\frac{\frac{N}{4s^2M}}{\frac{N}{4s^2M}+1}}
    \left[
    1-\frac{2}{\sqrt{3\pi}}\frac{e^{-\frac{3}{4}R_2^2}}{R_2}
    -
    \erf\left(
    \frac{3}{4\sqrt{\tilde{\Lambda}_2^2+\frac{3}{4}}}
    \right)
    \right]
    -
    5\zeta_N
    -
    4e^{-\frac{y^2}{2}}.
\end{align}
Again assuming that $\frac{N}{4s^2M}\leq 1$, and using $\left(1+\frac{N}{4s^2M}\right)^{-\frac{1}{2}}\geq\left(1-\frac{N}{8s^2M}\right)\geq\frac{1}{2}$, and $\frac{\frac{N}{4s^2M}}{\frac{N}{4s^2M}+1}\leq\frac{N}{4s^2M}$ in the exponent
    \begin{align}
    \begin{split}
        \mathcal{M}'_2
        &\geq
        \frac{1}{4}
        \sqrt{\frac{N}{4s^2M}}
        e^{-\frac{3}{4}\frac{N}{4s^2M}}
        \left[
        1-\frac{2}{\sqrt{3\pi}}\frac{e^{-\frac{3}{4}R_2^2}}{R_2}
        -
        \erf\left(
        \sqrt{\frac{3}{4}}
        \right)
        \right]
        -
        5\zeta_N
        -
        4e^{-\frac{y^2}{2}}.
    \end{split}
    \end{align}
    Now, since $e^{-\frac{3}{4}\frac{N}{4s^2M}} \geq \left(1-\frac{3}{4}\frac{N}{4s^2M}\right)$, and $\erf(0.86R_2)\leq 1$,
    \begin{align}
    \begin{split}
        \mathcal{M}'_2
        &\geq
        \frac{1}{2}
        \sqrt{\frac{N}{4s^2M}}
        \left(1-\frac{3}{4}\frac{N}{4s^2M}\right)
        \left[
        0.11
        -
        0.33\frac{e^{-\frac{3}{4}R_2^2}}{R_2}
        \right]
        -
        5\zeta_N
        -
        4e^{-\frac{y^2}{2}}.
    \end{split}
    \end{align}
    The final bound on $\mathcal{M}'_2$ is the following
    \begin{align}\label{eq:M21}
        \mathcal{M}'_2
        \geq
        0.05\sqrt{\frac{N}{4s^2M}}
        -
        \mathcal{O}\left(\left(\frac{N}{M}\right)^{\frac{3}{2}}\right)
        -
        \mathcal{O}(\zeta_N)
        -
        \mathcal{O}(e^{-\frac{y^2}{2}})
        -
        \mathcal{O}\left(\frac{e^{-\frac{R_2^2}{2}}}{R_2}\sqrt{\frac{N}{M}}\right).
    \end{align}
Considering the bound on $y$ from Eq. \eqref{eq:y_bound}, the overall average for the case when $M=\mathcal{O}\left(\frac{N}{\zeta_N^2}  \right)$ and $M =  \Omega(N)$, is
\begin{align}
    \vert\mu\vert
    \leq
    \frac{\sqrt{\frac{N}{2s^2M}}
    \Lambda_1
    +
    \mathcal{O}(\zeta_N\Lambda_1)}
    {0.05\sqrt{\frac{N}{4s^2M}}
        -
        \mathcal{O}\left(\left(\frac{N}{M}\right)^{\frac{3}{2}}\right)
        -
        \mathcal{O}(\zeta_N)
        -
        \mathcal{O}\left(\frac{e^{-\frac{R_2^2}{2}}}{R_2}\sqrt{\frac{N}{M}}\right)},
\end{align}
\begin{align}
    \vert\mu\vert
    \leq
    20\sqrt{2}\Lambda_1
    \left[
    1
    +
    \mathcal{O}\left(\sqrt{\frac{M}{N}}\zeta_N\right)
    +
    \mathcal{O}\left(\frac{N}{M}\right)
    +
        \mathcal{O}\left(\frac{e^{-\frac{R_2^2}{2}}}{R_2}\right)
    \right],
\end{align}
\begin{align}
    \vert\mu\vert
    \leq
    30\Lambda_1
    \left[
    1
    +
    \mathcal{O}\left(\frac{N}{M}\right)
    \right]\Rightarrow\vert\mu\vert= \mathcal{O} \left( \frac{N}{\sqrt{M}} \right).
\end{align}
\subsection{Upper bound on variance}
We now find an upper bound on the variance $\delta^2$ with a similar method. First, consider
\begin{align} \label{eq:mainvarianceeqn}
    \delta^2 &= \bra{p_{M,y}}H^2\ket{p_{M,y}} - \bra{p_{M,y}}H\ket{p_{M,y}}^2
    \leq 
    \bra{p_{M,y}}H^2\ket{p_{M,y}} \\
    & \le \frac{\sum\limits_{E_{j}\le 0}{|b_{j}|^2g_y^2\left(\frac{E_j}{N}\right)(E_j)^2}+\sum\limits_{E_{j}>0}{|b_{j}|^2g_y^2\left(\frac{E_j}{N}\right)(E_j)^2}}{\sum\limits_{E_{j}\le 0}{|b_{j}|^2g_y^2\left(\frac{E_j}{N}\right)}+\sum\limits_{E_{j}>0}{|b_{j}|^2g_y^2\left(\frac{E_j}{N}\right)}}
    :=
    \frac{\mathcal{V}_{1(-)}+\mathcal{V}_{1(+)}}{\mathcal{V}_{2(-)}+\mathcal{V}_{2(+)}}.\label{eq:main_variance_fraction_form}
\end{align}

We begin by bounding the positive energy contributions, $\mathcal{V}_{1(+)}$ and $\mathcal{V}_{2(+)}$. \ksr{The corresponding negative energy contributions follow directly, because the functions inside the sum are even}.
Using the inequalities from Eq. \eqref{eq:ineq_grouped_cosine}, we can rewrite the expression for $\mathcal{V}_{1(+)}$ and $\mathcal{V}_{2(+)}$ in terms of cosine power operators
\begin{align}
   \mathcal{V}_{1(+)}
    &\leq 
    \sum\limits_{E_{j}>0}{\left\vert b_{j}\right\vert^2\cos^{2M}{\left(\frac{E_j}{N}\right)}(E_j)^2}+4e^{-y^2}\sum\limits_{E_j>0}|b_j|^2(E_j)^2
    \\
    &\leq 
    \sum\limits_{E_{j}>0}{|b_{j}|^2\cos^{2M}{\left(\frac{E_j}{N}\right)}(E_j)^2}+4e^{-y^2}\sigma^2 
    =: \mathcal{I}_1,
\end{align}
where for the second term we use that 
$
    \sum_{E_j>0}|b_j|^2(E_j)^2 
    \leq
    \sigma^2$. Similarly, the denominator can be lower bounded as
\begin{align}
    \mathcal{V}_{2(+)}
    &\leq
    \sum\limits_{E_{j}>0}{|b_{j}|^2\cos^{2M}{\left(\frac{E_j}{N}\right)}}-4e^{-\frac{y^2}{2}}\sum\limits_{E_j>0}{|b_j|^2}
    \\
    &\leq
    \sum\limits_{E_{j}>0}{|b_{j}|^2\cos^{2M}{\left(\frac{E_j}{N}\right)}}-4e^{-\frac{y^2}{2}}
    =: \mathcal{I}_2
\end{align}
where the second inequality is obtained by upper bounding the cumulative distribution of $|b_j|^2$ by 1.
First we focus on simplifying $\mathcal{I}_{1}$. We again apply the segmentation method, in a similar manner as for the average in Eq. \eqref{eq:segmentation_method}. Dividing the total energy range $[-N,N]$ into $R_1$ segments of width $\Lambda_1> 0$ each, the first term of $\mathcal{I}_1$ is then bounded as
\begin{align}
    \sum\limits_{E_{j}>0}{|b_j|^2\cos^{2M}{\left(\frac{E_j}{N}\right)}(E_j)^2} 
    &\leq 
    \sum\limits_{l=0}^{R_1-1}{\cos^{2M}{\left(\frac{\Lambda_1l}{N}\right)}(\Lambda_1(l+1))^2}\sum\limits_{\Lambda_1 l<E_j<\Lambda_1(l+1)}{|b_j|^2},
    \\
     &\leq \sum\limits_{l=0}^{R_1-1}{\cos^{2M}{\left(\frac{\Lambda_1l}{N}\right)}(\Lambda_1(l+1))^2\left[\int\limits_{\Lambda_1l}^{\Lambda_1(l+1)}{\frac{dt}{\sigma\sqrt{2\pi}}e^{-\frac{t^2}{2\sigma^{2}}}}+\zeta_N\right]}.
\end{align}
Let $\tilde{t} = \frac{t}{\sqrt{2}\sigma},  d\tilde{t} = \frac{dt}{\sqrt{2}\sigma}$, and $\tilde{\Lambda}_1 = \frac{\Lambda_1}{\sqrt{2}\sigma}$, then LHS is bounded as
\begin{align}
\begin{split}
    \sum\limits_{E_{j}>0}{|b_j|^2\cos^{2M}{\left(\frac{E_j}{N}\right)}(E_j)^2} 
   \leq \frac{1}{\sqrt{\pi}}\sum\limits_{l=0}^{R_1-1}{\int\limits_{\tilde{\Lambda}_1 l}^{\tilde{\Lambda}_1(l+1)}{d\tilde{t}\cos^{2M}{\left(\sqrt{2}\sigma\left(\frac{\tilde{t}-\tilde{\Lambda}_1}{N}\right)\right)}(\sqrt{2}\sigma(\tilde{t}+\tilde{\Lambda}_1))^2 e^{-\tilde{t}^{2}}}}
   \\
   +
   2\sigma^2\zeta_N\sum\limits_{l=0}^{R_1-1}{\cos^{2M}{\left(\frac{\sqrt{2}\sigma\tilde{\Lambda}_1l}{N}\right)}(\tilde{\Lambda}_1(l+1))^2}.
\end{split}
\end{align}
We get the following upper bound on $\mathcal{I}_1$,
\begin{align}
\begin{split}
    \mathcal{I}_1
   \leq 
   \frac{1}{\sqrt{\pi}}&\int\limits_{0}^{\tilde{\Lambda}_1 R_1}{d\tilde{t}\cos^{2M}{\left(\sqrt{2}\sigma\left(\frac{\tilde{t}-\tilde{\Lambda}_1}{N}\right)\right)}(\sqrt{2}\sigma(\tilde{t}+\tilde{\Lambda}_1))^2 e^{-\tilde{t}^{2}}}
   \\&+
   2\sigma^2\zeta_N\sum\limits_{l=0}^{R_1-1}{\cos^{2M}{\left(\frac{\sqrt{2}\sigma\tilde{\Lambda}_1l}{N}\right)}(\tilde{\Lambda}_1(l+1))^2}
   +4e^{-y^2}\sigma^2.
\end{split}
\end{align}
Using inequalities from Eq. \eqref{eq:cosine-property2} to replace cosines by exponentials
\begin{equation}
    \mathcal{I}_1
   \leq 
   \frac{1}{\sqrt{\pi}}\int\limits_{0}^{\tilde{\Lambda}_1R_1}{d\tilde{t} e^{-2M\sigma^2(\frac{\tilde{t}-\tilde{\Lambda}_1}{N})^{2}}(\sqrt{2}\sigma(\tilde{t}+\tilde{\Lambda}_1))^2 e^{-\tilde{t}^{2}}}
   +
   2\sigma^2\zeta_N\sum\limits_{l=0}^{R_1-1}{e^{-4M\sigma^2\frac{\tilde{\Lambda}_1^2l^2}{N^2}}(\tilde{\Lambda}_1(l+1))^2}+4e^{-y^2}\sigma^2.
\end{equation}

Now consider $\sigma = s \sqrt{N}$ so that 
\begin{equation}\label{eq:variance_numerator_main}
    \mathcal{I}_1
    \leq 
    \frac{2s^2N}{\sqrt{\pi}}\int\limits_{0}^{\tilde{\Lambda}_1R_1}{d\tilde{t} e^{-\frac{2s^2 M}{N}(\tilde{t}-\tilde{\Lambda}_1)^{2}}(\tilde{t}+\tilde{\Lambda}_1)^2 e^{-\tilde{t}^{2}}}
    +
    2s^2N\zeta_N\sum\limits_{l=0}^{R_1-1}{e^{-\frac{4s^2M}{N}\tilde{\Lambda}_1^2l^2}(\tilde{\Lambda}_1(l+1))^2}
    +4s^2 Ne^{-y^2}.
\end{equation}
\ksr{Using the Euler-Maclaurin formula from equation~\eqref{eq:euler_maclaurin}}, the integral in the first term is bounded as 
\begin{align}
\begin{split}
    \frac{2s^2N}{\sqrt{\pi}}\int\limits_{0}^{\tilde{\Lambda}_1R_1}{d\tilde{t} e^{-\frac{2s^2 M}{N}(\tilde{t}-\tilde{\Lambda}_1)^{2}}(\tilde{t}+\tilde{\Lambda}_1)^2 e^{-\tilde{t}^{2}}}
    \le
    \frac{2s^2N}{\sqrt{\pi}}\frac{e^{-\frac{2s^2M}{N}\tilde{\Lambda}_1^2}}{4(1+\frac{2s^2M}{N})^{\frac{5}{2}}}
    \Bigg[
    4\tilde{\Lambda}_1\sqrt{1+\frac{2s^2M}{N}}\left(1+\frac{3s^2M}{N}\right)
    +
    \\
    -
    2e^{-\tilde{\Lambda}_1^2R_1^2(1+\frac{2s^2M}{N})}e^{\tilde{\Lambda}_1^2R_1\frac{4s^2M}{N}}
    \tilde{\Lambda}_1\sqrt{1+\frac{2s^2M}{N}}\left(R_1\left(1+\frac{2s^2M}{N}\right)+2\left(1+\frac{3s^2M}{N}\right)\right)
    +
    \\
    +
    e^{\frac{1}{1+\frac{2s^2M}{N}}(\frac{2s^2M}{N}\tilde{\Lambda}_1)^2}
    \sqrt{\pi}
    \left(
    \left(
    1+\frac{2s^2M}{N}
    \right)
    +
    2\tilde{\Lambda}_1^2
    \left(
    1+\frac{4s^2M}{N}
    \right)^2
    \right)
    \Bigg[    
    \erf\left(
    \frac{\frac{2s^2M}{N}\tilde{\Lambda}_1}{\sqrt{1+\frac{2s^2M}{N}}}
    \right)
    +
    \\
    +
    \erf\left(
    \frac{\tilde{\Lambda}_1R_1(1+\frac{2s^2M}{N})-\frac{2s^2M}{N}\tilde{\Lambda}_1}{\sqrt{1+\frac{2s^2M}{N}}}
    \right)
    \Bigg]
    \Bigg],
\end{split}
\end{align}
\begin{align}
\begin{split}
    \frac{2s^2N}{\sqrt{\pi}}\int\limits_{0}^{\tilde{\Lambda}_1R_1}{d\tilde{t} e^{-\frac{2s^2 M}{N}(\tilde{t}-\tilde{\Lambda}_1)^{2}}(\tilde{t}+\tilde{\Lambda}_1)^2 e^{-\tilde{t}^{2}}}
    &\leq
    \frac{2s^2N}{\sqrt{\pi}}\frac{e^{-\frac{2s^2M}{N}\tilde{\Lambda}_1^2}}{4(1+\frac{2s^2M}{N})^{\frac{5}{2}}}
    \left[
    4\tilde{\Lambda}_1\sqrt{1+\frac{2s^2M}{N}}\left(1+\frac{3s^2M}{N}\right)
    \right.
    \\
    &\quad
    -
    2e^{-\tilde{\Lambda}_1^2R_1^2(1+\frac{2s^2M}{N})}e^{\tilde{\Lambda}_1^2R_1\frac{4s^2M}{N}}
    \tilde{\Lambda}_1R_1\left(1+\frac{2s^2M}{N}\right)^{\frac{3}{2}}
    \\
    &\quad
    \left.
    +
    2e^{\frac{1}{1+\frac{2s^2M}{N}}(\frac{2s^2M}{N}\tilde{\Lambda}_1)^2}
    \sqrt{\pi}
    \left(
    \left(
    1+\frac{2s^2M}{N}
    \right)
    \right.
    \right.
    \\&\quad\left.\left.+
    2\tilde{\Lambda}_1^2
    \left(
    1+\frac{4s^2M}{N}
    \right)^2
    \right)
    \right].
\end{split}
\end{align}

Substituting $C_1 = \frac{2s^2M}{N}\tilde{\Lambda}_1^2$ 
\begin{align}\label{eq:main_term_variance_num}
    \frac{2s^2N}{\sqrt{\pi}}\int\limits_{0}^{\tilde{\Lambda}_1R_1}{d\tilde{t} e^{-\frac{2s^2 M}{N}(\tilde{t}-\tilde{\Lambda}_1)^{2}}(\tilde{t}+\tilde{\Lambda}_1)^2 e^{-\tilde{t}^{2}}}
    \leq
    \frac{3s^2N}{\sqrt{\pi}}\frac{\tilde{\Lambda}_1^3e^{-C_1}}{\tilde{\Lambda}_1^2+C_1}
    +
    s^2N\frac{\tilde{\Lambda}_1^3e^{-\frac{\tilde{\Lambda}_1^2C_1}{\tilde{\Lambda}_1^2+C_1}}}{(\tilde{\Lambda}_1^2+C_1)^{\frac{3}{2}}}
    +
    8s^2N\frac{\tilde{\Lambda}_1^3e^{-\frac{\tilde{\Lambda}_1^2C_1}{\tilde{\Lambda}_1^2+C_1}}}{\sqrt{\tilde{\Lambda}_1^2+C_1}}
    .
\end{align}
The error term of Eq. \eqref{eq:variance_numerator_main} is bounded as
\begin{align}\label{eq:err_term_variance_num}
\begin{split}
   2s^2N\zeta_N\sum\limits_{l=0}^{R_1-1}{e^{-\frac{4s^2M}{N}\tilde{\Lambda}_1^2l^2}(\tilde{\Lambda}_1(l+1))^2}
    &\leq
    \frac{s^2(8\sqrt{C_1}+\sqrt{2\pi}(1+4C_1))}{8C_1\sqrt{C_1}}N\zeta_N\tilde{\Lambda}_1^2
    \\&+
    s^2N\zeta_N\tilde{\Lambda}_1^2\left(
    e^{-2C_1(R_1-1)^2}R_1^2
    +
    1
    \right)
    \\
    &+
    \frac{s^2N\zeta_N}{6}
    (2e^{-\frac{4s^2M}{N}\tilde{\Lambda}_1^2(R_1-1)^2}\tilde{\Lambda}_1^2R_1)
    \\
    &+
    2s^2N\zeta_N\tilde{\Lambda}_1^2\frac{\zeta(2)}{\pi^2}\left[5+\sqrt{\frac{\pi}{8C_1}}(5+4C_1)\right].
\end{split}
\end{align}
\ksr{Combining Eq.~\eqref{eq:main_term_variance_num} and Eq.~\eqref{eq:err_term_variance_num},} we get the following bound on $\mathcal{I}_1$,
\begin{align}
\begin{split}
    \mathcal{I}_1
    &\leq
    \frac{3s^2N}{\sqrt{\pi}}\frac{\tilde{\Lambda}_1^3e^{-C_1}}{\tilde{\Lambda}_1^2+C_1}
    +
    s^2N\frac{\tilde{\Lambda}_1^3e^{-\frac{\tilde{\Lambda}_1^2C_1}{\tilde{\Lambda}_1^2+C_1}}}{(\tilde{\Lambda}_1^2+C_1)^{\frac{3}{2}}}
    +
    8s^2N\frac{\tilde{\Lambda}_1^3e^{-\frac{\tilde{\Lambda}_1^2C_1}{\tilde{\Lambda}_1^2+C_1}}}{\sqrt{\tilde{\Lambda}_1^2+C_1}}
    +
    \\
    &+
    \frac{s^2(8\sqrt{C_1}+\sqrt{2\pi}(1+4C_1))}{8C_1\sqrt{C_1}}N\zeta_N\tilde{\Lambda}_1^2
    +
    s^2N\zeta_N\tilde{\Lambda}_1^2\left(
    e^{-2C_1(R_1-1)^2}R_1^2
    +
    1
    \right)
    +
    \\
    &+
    \frac{s^2N\zeta_N}{6}
    (2e^{-\frac{4s^2M}{N}\tilde{\Lambda}_1^2(R_1-1)^2}\tilde{\Lambda}_1^2R_1)
    +
    2s^2N\zeta_N\tilde{\Lambda}_1^2\frac{\zeta(2)}{\pi^2}\left[5+\sqrt{\frac{\pi}{8C_1}}(5+4C_1)\right]
    +
    4s^2 Ne^{-y^2}.
\end{split}
\end{align}
To simplify the above form note that
\begin{equation}
    e^{-\frac{\tilde{\Lambda}_1^2C_1}{\tilde{\Lambda}_1^2+C_1}}
    \leq
    1 \quad \quad \text{and}  \quad \quad  \frac{\tilde{\Lambda}_1}{\sqrt{\tilde{\Lambda}_1^2+C_1}}
    =
    \sqrt{\frac{1}{1+\frac{2s^2M}{N}}}
    \leq
    1,
\end{equation}
for any scaling of $M$.
Additionally, for $\frac{N}{2s^2M}<1$ there is the tighter bound
\begin{align}
    \frac{\tilde{\Lambda}_1}{\sqrt{\tilde{\Lambda}_1^2+C_1}}
    =
    \sqrt{\frac{1}{1+\frac{2s^2M}{N}}}
    \leq
    \sqrt{\frac{N}{2s^2M}}.
\end{align}
So, for $\frac{N}{2s^2M}<1$ the numerator is bounded as
\begin{align}
\begin{split}
    \mathcal{I}_1
    &\leq
    \frac{3s^2N}{\sqrt{\pi}}\frac{\tilde{\Lambda}_1^3}{\tilde{\Lambda}_1^2+C_1}
    +
    s^2N\frac{\tilde{\Lambda}_1^3}{(\tilde{\Lambda}_1^2+C_1)^{\frac{3}{2}}}
    +
    8s^2N\frac{\tilde{\Lambda}_1^3}{\sqrt{\tilde{\Lambda}_1^2+C_1}}
    +
    \\
    &+
    \frac{s^2(8\sqrt{C_1}+\sqrt{2\pi}(1+4C_1))}{8C_1\sqrt{C_1}}N\zeta_N\tilde{\Lambda}_1^2
    +
    s^2N\zeta_N\tilde{\Lambda}_1^2\left(
    e^{-2C_1(R_1-1)^2}R_1^2
    +
    1
    \right)
    +
    \\
    &+
    \frac{s^2N\zeta_N}{6}
    (2e^{-\frac{4s^2M}{N}\tilde{\Lambda}_1^2(R_1-1)^2}\tilde{\Lambda}_1^2R_1)
    +
    2s^2N\zeta_N\tilde{\Lambda}_1^2\frac{\zeta(2)}{\pi^2}\left[5+\sqrt{\frac{\pi}{8C_1}}(5+4C_1)\right]
    +
    4s^2 Ne^{-y^2}.
\end{split}
\end{align}
Now applying the following inequalities to the first three terms
\begin{align}
    \frac{\tilde{\Lambda}_1^3}{\tilde{\Lambda}_1^2+C_1}
    \leq
    \frac{N}{2s^2M}\tilde{\Lambda}_1,
    \\
    \frac{\tilde{\Lambda}_1^3}{(\tilde{\Lambda}_1^2+C_1)^{\frac{3}{2}}}
    \leq
    \left(\frac{N}{2s^2M}\right)^{\frac{3}{2}},
    \\
    \frac{\tilde{\Lambda}_1^3}{\sqrt{\tilde{\Lambda}_1^2+C_1}}
    \leq
    \tilde{\Lambda}_1^2
    \sqrt{\frac{N}{2s^2M}}.
\end{align}
Consider the following,

(i) $C_1 = \mathcal{O}(1)$ which means that $ \tilde{\Lambda}_1=\mathcal{O}(1)$ for $M=\Omega(N)$, 

(ii) $y=\Omega\left(\sqrt{\log\frac{M^3}{N^2}}\right)$ from Eq. \eqref{eq:y_bound},

(iii) $\tilde{\Lambda}_1R_1 \propto \sqrt{N}$, then
\begin{align}
\begin{split}
    \mathcal{I}_1
    \leq
    \frac{3}{2\sqrt{\pi}}\tilde{\Lambda}_1
    \frac{N^2}{M}
    +
    \frac{1}{2\sqrt{2}s}\frac{N^{\frac{5}{2}}}{M^{\frac{3}{2}}}
    +
    4\sqrt{2}s\tilde{\Lambda}_1^2\frac{N^{\frac{3}{2}}}{M^{\frac{1}{2}}}
    +
    \mathcal{O}(N\zeta_N\tilde{\Lambda}_1^2)
    +
    \mathcal{O}(N^2\zeta_Ne^{-2C_1(R_1-1)^2}).
\end{split}
\end{align}
Substituting $\tilde{\Lambda}_1 = \sqrt{C_1}\sqrt{\frac{N}{2s^2M}}$,
\begin{align}\label{eq:final_numerator_variance}
\begin{split}
    \mathcal{I}_1
    \leq
    \left(\frac{3\sqrt{C_1}}{2s\sqrt{2\pi}}
    +
    \frac{1}{2\sqrt{2}s}
    +
    \frac{2\sqrt{2}C_1}{s}\right)\frac{N^{\frac{5}{2}}}{M^{\frac{3}{2}}}
    +
    \mathcal{O}\left(\frac{N^2}{M}\zeta_N\right)
    +
    \mathcal{O}(N^2\zeta_Ne^{-2C_1(R_1-1)^2}).
\end{split}
\end{align}

We have obtained an upper bound on $\mathcal{I}_1$ which upper bounds $\mathcal{V}_{1(+)}$ and $\mathcal{V}_{1(-)}$ \ksr{(numerator of Eq.~\eqref{eq:main_variance_fraction_form})}.
For the denominator, note that $\mathcal{M}'_2$ in Eq. \eqref{eq:M21} from the average calculation is also a lower bound on the denominator terms $\mathcal{V}_{2(+)}$ and $\mathcal{V}_{2(-)}$ of the variance. Combining all the expressions, we can obtain the upper bound on the variance for $M=\mathcal{O}\left(\frac{N}{\zeta_N^2}  \right)$ and $M =  \Omega(N)$ as follows,
\begin{equation}\label{eq:final_variance_unsimplified}
        \delta^2
        \leq
        \frac{
        \begin{split}
    \left(\frac{3\sqrt{C_1}}{2s\sqrt{2\pi}}+\frac{1}{2\sqrt{2}s}+\frac{2\sqrt{2}C_1}{s}\right)\frac{N^{\frac{5}{2}}}{M^{\frac{3}{2}}}
        \left[
        1
        +
        \mathcal{O}\left(\sqrt{\frac{M}{N}}\zeta_N\right)
        +
        \mathcal{O}\left(\frac{M^{\frac{3}{2}}}{N^{\frac{1}{2}}}\zeta_Ne^{-2C_1(R_1-1)^2}\right)\right.
        \\\left.+
        \mathcal{O}\left(\frac{M^{\frac{3}{2}}}{N^{\frac{3}{2}}}\zeta_NR_1e^{-\frac{2s^2M}{N}R_1}\right)
        \right]
        \end{split}
        }
        {0.1\sqrt{\frac{N}{4s^2M}}
        \left[
        1
        -
        \mathcal{O}\left(\frac{N}{M}\right)
        -
        \mathcal{O}\left(\zeta_N\sqrt{\frac{M}{N}}\right)
        -
        \mathcal{O}\left(\frac{e^{-\frac{R_2^2}{2}}}{R_2}\right)
        \right]},
    \end{equation}
    \begin{align}
        \Rightarrow
        \delta^2
        \leq
        \eta\frac{N^2}{M}
        \left[
        1
        +
        \mathcal{O}\left(\frac{N}{M}\right)
        \right],
    \end{align}
where we define the constant
    \begin{equation}
        \eta := \frac{15\sqrt{C_1}}{\sqrt{2\pi}}+\frac{5}{\sqrt{2}}+20\sqrt{2}C_1.
    \end{equation}
    
\end{document}